\newtheorem{theorem}{Theorem}[section]
\newtheorem{claim}[theorem]{Claim}
\newtheorem{lemma}[theorem]{Lemma}
\newtheorem{proposition}[theorem]{Proposition}
\def\RR{\mathbb{R}}
\def\CF{{\cal F}}
\def\pr{\mbox{P}}
\def\ex{\mbox{E}}
\def\parent{\mbox{\em parent}}
\def\anc{\mbox{\em anc}}
\def\hn{\mbox{\sf HN}}
\def\hnbar{\overline{\hn}}
\def\height{\mbox{\it ht}}
\def\edges{\mbox{\sf EN}}
\def\lev{\mbox{\it lev}}
\def\conn{\mbox{\em conn}}
\def\directory{\mbox{\em directory}}
\def\rec{g}
\def\cone{\vee^\theta}
\def\x{\mbox{\boldmath $x$}}
\def\i{\mbox{\boldmath $i$}}
\newcommand{\ignore}[1]{ }
\newcommand{\sq}{\hbox{\rlap{$\sqcap$}$\sqcup$}}
\newcommand{\qed}{\hspace*{\fill}\sq}
\newenvironment{proof}{\noindent {\bf Proof.}\ }{\qed\par\vskip
  4mm\par}
\def\CF{\mathcal{F} }
\def\CK{\mathcal{K} }
\def\RR{\mathbb{R}}
\def\ZZ{\mathbb{Z}}
\def\N{\mathbb{N}}
\def\direct{\leftrightarrow}
\definecolor{Blue}{rgb}{0.3,0.3,0.9}
\author{
Amitabha Bagchi\footnotemark[1] \and Adit Madan\footnotemark[1] \and Achal Premi\footnotemark[1]
}
\title{Hierarchical neighbor graphs: An energy-efficient bounded degree connected structure for
  wireless networks \footnotetext[1]{Indian Institute of Technology, Hauz Khas, New Delhi, India. Email: bagchi@cse.iitd.ernet.in, madanadit@gmail.com, achalpremi@gmail.com.}
}
\begin{document}
\maketitle

\begin{abstract}
  We introduce hierarchical neighbor graphs, a new architecture for
  connecting ad hoc wireless nodes distributed in a plane. The
  structure has the flavor of hierarchical clustering and requires
  only local knowledge and minimal computation at each node to be
  formed and repaired. Hence, it is a suitable interconnection model
  for an ad hoc wireless sensor network. The structure is able to use
  energy efficiently by reorganizing dynamically when the battery
  power of heavily utilized nodes degrades and is able to achieve
  throughput, energy efficiency and network lifetimes that compare
  favorably with the leading proposals for data collation in sensor
  networks such as LEACH (Heinzelman et. al., 2002). Additionally,
  hierarchical neighbor graphs have low power stretch i.e. the power
  required to connect nodes through the network is a small factor
  higher than the power required to connect them directly. Our
  structure also compares favorably to mathematical structures
  proposed for connecting points in a plane e.g. nearest-neighbor
  graphs~(Ballister et. al., 2005), $\theta$-graphs~(Ruppert and
  Seidel, 1991), in that it has expected constant degree and does not
  require any significant computation or global information to be
  formed.
\end{abstract}


\section{Introduction}
\label{sec:intro}

Topology control is a fundamental problem in the study of wireless ad
hoc and sensor networks. A primary issue in this area is that of
constructing a connected network between the nodes while keeping in
mind the various constraints wireless devices operate under. Several
architectures have been suggested to solve this problem, each striving
to achieve multiple objectives vital to obtaining high throughput and
low latency while expending as little energy as possible. Bounded
degree is one such objective, needed to reduce the overhead of channel
state exchange between neighbors when nodes use MIMO
antennas~\cite{zorzi-twn:2005}.  Another important criterion of a good
solution is that the number of hops between nodes be small, required because
multi-hop wireless networks are error-prone and the probability of
packet loss increases with path length. For wireless nodes with
sufficient battery power, a connected topology must also ensure that
the power stretch--the ratio of power spent by two nodes in connecting
through the network to power spent in communicating directly--must be
kept low~\cite[Chapter 9]{li-cup:2008}.

In this paper we propose a novel architecture for connectivity in
wireless networks: Hierarchical neighbor graphs. Our structure is a
randomized one that effectively combines ideas from skip list data
structures~\cite{pugh-cacm:1991,bagchi-algorithmica:2005} and
nearest-neighbor graphs~\cite{xue-wn:2004,ballister-aap:2005} to give
a hierarchical bounded degree structure for connecting points in a
plane that has short paths between nodes both in terms of number of
hops and distance. Hierarchical neighbor graphs can be built with
local information and without any substantial computation. In order to
find connections, nodes do not need to estimate any angles or
distances, they only need to know the relative distance of their
neighbors, which can be easily determined from signal strength. Unlike
other nearest-neighbor flavored structures, hierarchical neighbor
graphs are able to incorporate battery power as a parameter and ensure
that nodes with low battery power are not expected to transmit to
nodes that are located far away. This property, along with ease of
deployment and reformation as battery power decreases, makes our
structure a good candidate for collating data from a field of wireless
sensor nodes. And, in fact, we demonstrate that our structure is more
energy efficient and has better throughput than one of
the leading proposals in this area, LEACH~\cite{heinzelmann-twc:2002},
while having network lifetime comparable to it.

The key idea of our construction is that each node is assigned a level
and chooses one neighbor from a level above it and its other neighbors
from its own level. This assignment of levels is done through a random
process which helps ensure that the degree of each node is bounded in
expectation no matter what the positions of the points. For the
special case where the locations of the points are generated by a
Poisson point process we show that even when all nodes have equal
battery power, the probability of the edges formed being long is very
small. Encouraged by this we define a edge-length bounded version of
our structure in which all edges above a certain length, determined by
the battery power of the node, are deleted. We demonstrate that for a
Poisson point process of sufficiently high density, this more
practical structure is still connected.

\noindent{\em Paper organization and our contributions.} In
Section~\ref{sec:definition} we define hierarchical neighbor
graphs. We also describe proactive and reactive routing protocols for
this architecture and discuss how it can be easily adapted when nodes
join or leave or when battery power changes. Additionally:
\begin{itemize}
\item We show analytically that hierarchical neighbor graphs have
  bounded degree in expectation (Section~\ref{sec:properties:degree}).
\item When the location of the points are determined by a Poisson
  point process, the probability of long-range connections being
  formed is very low. We also define a bounded connection length
  version of our structure and show through simulation that it is
  connected for Poisson point processes of sufficiently high density
  (Section~\ref{sec:properties:length}).
\item We show analytically that the number of hops between any two
  nodes in our structure is low (Section~\ref{sec:properties:hops}).
\item We show through simulation that the distance stretch of our
  structure has an exponentially decaying distribution, decaying
  faster for pairs of points that are further apart. We show an
  initial theorem in this direction
  (Section~\ref{sec:properties:stretch}). 
\item We simulate the use of our architecture to collate data from a
  field of wireless sensors and show that it performs better than
  LEACH~\cite{heinzelmann-twc:2002} in terms throughput and energy
  efficiency while matching it in terms of network lifetime
  (Section~\ref{sec:sensors}).
\end{itemize}

\subsection{Related work}
\label{sec:intro:related}
The area of topology control for wireless ad hoc and sensor networks
has seen a lot of activity over the years. Several surveys and books
are available on this topic (see
e.g.~\cite{rajaraman-sigact:2002,santi-wiley:2005,li-cup:2008}).  The
proposed architectures range from the geometry-based Gabriel
Graphs~\cite{gabriel-sz:1969} and Relative Neighborhood
Graphs~\cite{toussaint-pr:1980} and their numerous variants, and
direction based proposals like Yao Graphs~\cite{yao-sjc:1982} and
their family. Li gives a fairly comprehensive survey of these kinds of
architectures in Chapter 9 of his book~\cite{li-cup:2008}. Of
particular interest to us in this class of proposals are
$\theta$-graphs~\cite{ruppert-cccg:1991}, in which the area around
each node is partitioned into cones of angle $\theta$ and the node is
connected to the nearest neighbor in each cone. These graphs can be
shown to be spanners i.e. the graph distance between any two points is
a bounded factor larger than their Euclidean distance. Arya, Mount and
Smid~\cite{arya-cgta:1999} refined this proposal in a flavor very
similar to ours by combining skip lists with $\theta$-graphs to
propose skip list spanners, which have the spanning property of
$\theta$-graphs but have fewer hops between all pairs on
nodes--$\theta(\log n)$ hops for a point set with $n$ nodes--than
$\theta$-graphs have. While Arya et. al.'s proposal seems very similar
to ours, they have one major disadvantage they share with all the
other structures of the family of geometric graphs: constructing these
kinds of graphs normally involves computation involving the relative
position of a point and its neighbors and an area surrounding these
points. We note that our hierarchical neighbor graphs avoid the
computation overhead that these kinds of construction require. This
also makes them superior to the class of topology control proposals
that are based on computing dominating sets or spanning trees (see
e.g.~\cite{wightman-globecom:2008}), computations which are time
consuming and construct structures which are difficult to adapt in
dynamic situations. This class of proposals contains several variants,
too numerous to recount here, so we refer the reader to an excellent
survey by Santi~\cite{santi-wiley:2005}.

Clustering based approaches to topology control have also been studied
extensively in the literature.  Some of the common assumptions and
algorithmic features of these proposals are location unawareness,
periodic selection of cluster heads, equal battery power but
multi-power transmission ability.  The systems are evaluated by
evaluating throughput and network lifetime under the assumption that
battery power is not replenished. Within these constraints Younis and
Fahmy proposed a distributed clustering approach, HEED, for maximizing
lifetime of ad hoc sensor networks~\cite{younis-tmc:2004}. This
approach was extended by Huang and Wu, with an additional assumption
of uniform distribution of nodes~\cite{huang:scvt-2005}. Basagni gave
a method of choosing cluster heads based on a node-mobility
parameter~\cite{basagni:ispan-1999}.  Heinzelman, Chandrakasan and
Balakrishnan's architecture, LEACH, is an application specific
protocol for ad hoc wireless
sensors\cite{heinzelmann-twc:2002}. Cluster heads receive data from
sensor nodes and compress it before sending to the base station.
Similar to our approach, LEACH incorporates randomized rotation of
high energy cluster head among sensors, keeping the expected number of
cluster heads in the network constant. By contrast, in our approach
the rotation is implicit and does not require additional
computation. Also, for LEACH, nodes must have an estimate of energy
remaining in the network at the end of each round. Another assumption
LEACH makes that we do not share is that number of nodes in the
network and the number of cluster heads must be known to every
node. In Section~\ref{sec:sensors} we compare our structure to LEACH
and find that under similar assumptions regarding transmission energy,
hierarchical neighbor graphs provide better throughput and energy
efficiency with matching network lifetime.  We note that systems like
LEACH attempt to save power by randomly rotating cluster head
responsibilities. A more sophisticated approach involves choosing
nodes with greater residual energy to perform cluster head roles (See
e.g.~\cite{xiao-acis:2007,younis-tmc:2004}), and this is the paradigm
in which hierarchical neighbor graphs fall.

Another method related to our approach in the sense that clustering is
implicit is CLUSTERPOW introduced by Kawadia and
Kumar~\cite{kawadia-Infocom:2003}. No leader or gateway is explicitly
selected rather the clustered structure of the network is manifested
in the way routing is done. Each node can transmit at different finite
number of power levels. A route is a non increasing sequence of power levels.

On the mathematical front, nearest neighbor models have been studied
by H\"aggstr\"om and Meester~\cite{haggstrom-rsa:1996} who proved that
there is a critical value $k_c$ dependent on the dimension of the
space such that if each point is connected to its $k$ nearest
neighbors for any $k > k_c$, an infinite component exists almost
surely. Restricting this model to a square box of area $n$, Xue and
Kumar~\cite{xue-wn:2004} showed that $k$ must be at least 0.074$\log
n$ for the graph to be connected. This lower bound was improved to
0.3043$\log n$ by Ballister, Bollob\'as, Sarkar and
Walters~\cite{ballister-aap:2005} who also showed that the threshold
for connectivity is at most $0.5139 \log n$ as well as corresponding
results for the directed version of the problem. The advantage our
model enjoys over this model is that by selecting neighbors carefully
from the set of proximate nodes, hierarchical neighbor graphs achieve
constant degree in expectation. Additionally our hierarchical
structure also ensures paths with fewer hops.

Finally, we end by mentioning that hierarchical neighbor graphs bear
more than a passing resemblance to the skip list data
structure~\cite{pugh-cacm:1991} and its biased
version~\cite{bagchi-algorithmica:2005}.

\section{Hierarchical neighbor graphs}
\label{sec:definition}
In this section we define hierarchical neighbor graphs and some
special versions of them. We also describe routing protocols of these
structures and methods for adapting them in dynamic settings. We will
use the notation that for $u,v \in \RR^2$, $d(u,v)$ is the Euclidean
distance between the points. 

Consider a set of points $V \subset \RR^{2}$. We are given a function
$w : V \rightarrow \RR_+$ such that each node $u\in V$ has a battery
power $c \cdot w(u)$ associated with it, where $c$ is a constant
determined by the minimum battery power a node needs to operate
i.e. $c$ ensures that any functioning node has weight at least
1. Taking a parameter $p$ such that $0 < p <1$, we form the
$p$-hierarchical neighbor graph on $V$ with weight function $w$,
denoted $\hn_p^w(V)$ as follows:
\begin{enumerate}
\item We create a sequence $\{S_n : n \geq 0\}$ of subsets of $V$ such
  that $S_0 = V$. The point of a set $S_{i-1}$ are ``promoted'' to
  $S_i$ in two ways, one deterministic and one randomized.
\begin{itemize}
\item Deterministically, all $u \in S_{i-1}$ with $\lfloor
\log_{\frac{1}{p}} w(u) \rfloor \geq i$ are put into $S_i$. 
\item The remaining
points of $S_{i-1}$ are placed in $S_i$ with probability $p$
independently of the choice of all other points. 
\end{itemize}
\item After obtaining the sequence of sets, we say that the {\em
    level} of $u$, $\lev_p(u) = i$ such that $u \in S_i$ and $u \not\in
  S_{i+1}$.
\item Each point $u\in V$ grows a circle around it which stops growing
  the first time a point $v$ with $\lev(v) > i$ is encountered. This
  point is called $\parent(u)$. $u$ makes connections to all nodes $w$
  with $\lev(w) = i$ that lie within this circle and to the node(s) of
  $S_{i+1}$ that lie on the circumference of the circle.  Note that
  this algorithm is fully distributed in nature. All connections are
  made through local interactions.
\end{enumerate}
The
algorithm for forming $\hn^w_p(V)$ is described in
Figure~\ref{fig:formation}.
\begin{figure}[ht]
\begin{center}
\fbox{
\begin{minipage}{0.9\columnwidth}
\noindent{Algorithm {\sf construct}($\hn_p(V)=(V,\edges_p)$)}
\begin{enumerate}
\item $\edges_p=\phi$
\item At each node $u$ do
\begin{enumerate}
\item Determine $\lev_p(u) = i$ such that $u \in S_i$ and
  $u \not\in S_{i+1}$.
\item Find the nearest neighbor $z\in S_{\lev(u)+1}$ of $u$;\\
$\parent(u) \leftarrow z$;
\\$\edges_p^{\lev(u)} \leftarrow \edges_p^{\lev(u)} \cup \{(u, \parent(u))\}$;
\item For each $v\in S_{\lev(u)} \setminus S_{\lev(u)+1}$ such that $d(u,v)\leq d(u,\parent(u))$ i.e
$v$ is closer to $u$ than $\parent(u)$
\\$\edges_p^{\lev(u)} \leftarrow \edges_p^{\lev(u)} \cup \{(u,v)\}$ ;
\end{enumerate}
\item $\edges_p \leftarrow \edges_p \cup \textstyle \bigcup_{i=0}^{\infty} \edges_p^i$.
\end{enumerate}
\end{minipage}}
\caption{Building subgraph $\hn_p(V)$.}
\label{fig:formation}
\end{center}
\end{figure}

For the special case where $w(u) = 1$ for all $u \in V$,
we denote the structure thus formed $\hn_p(V)$. 

\subsubsection{Radius-bounded hierarchical neighbor graphs} We define
a variant of hierarchical neighbor graphs, called radius-bounded hierarchical
neighbor graphs, which take another function $r : V \rightarrow \RR_+$
as a parameter. We denote this graph built on a point set $V$ with
weight function $w$ and parameter $p$ by $\hnbar^{r,w}_p(V)$. The
function $r$ limits the transmission radius of the nodes i.e. node $u$
is not allowed to connect to any node $v$ such that $d(u,v) >
r(u)$. Hence $\hnbar^{w,r}_p(V)$ is a subgraph of $\hn^w_p(V)$ and may
not be connected. This model is, however, a more practical one, since
it does not assume that nodes can form connections with distant
neighbors.  As before, we denote by $\hnbar_p^r(V)$, the special case
of $\hnbar^{w,r}_p(V)$ where all weights are 1.

\subsubsection{Repairing hierarchical neighbor graphs}
Note that hierarchical neighbor graphs are adaptable in the sense that
should a node $u$'s power decreases to $1/p$ times its original amount
its level can be readjusted by reducing the deterministic part of its
promotion by 1, without touching the probabilistic promotions. This
leads to an overall decrease of 1 in $\lev_p(u)$. Similarly if the
power available with a node increases to $1/p$ times the original
amount, the deterministic part of the promotion of a node can be
increased by 1, thereby increasing $\lev_p(u)$ by 1. Clearly when
either of these happens the structure requires some repair which we
discuss next

\paragraph{Node addition or battery power gain} When a node $u$ is
added to $V$, it first determines $\lev_p(u)$. Once its membership in
sets $S_0,\ldots, S_{\lev_p(u)}$ is established, all points $v$ with
$\lev_p(v) < \lev_p(u)$ whose radius of connection contains $u$
truncate their radius of connection and make $u$ their parent. Nodes
$v$ with $\lev_p(v) = \lev_p(u)$ form connections to $u$ if it lies
within their radius of connection. All other nodes are
unaffected. Similarly when a node $u$ gains battery power,
$\lev^w_p(u)$ increases from $i$ to $j$ and nodes of $S_i, S_{i+1},
\ldots, S_j$ may have to truncate their radius of connection if $u$
lies within it.  \paragraph{Node departure or battery power loss} When
a node $u$ departs from $V$, all nodes that have $u$ as their parent
have to expand their radius of connection to find a new parent. Nodes
in $S_{\lev_p(u)}$ that were connected to $u$ simply note its
departure but don't have to change anything. Similarly when the
battery power of a node $u$ in $\hn^w_p(V)$ decreases from $i$ to $j$,
nodes of $S_k \setminus S_{k+1}$ that had $u$ as their parent have to
increase their radius and find a new parent, for $j \leq k \leq
i-1$. Nodes of $S_j \setminus S_{j+1}$ simply note the departure of
$u$ from $S_j$.

\subsection{Routing} We describe two approaches to routing on
$\hn_p(V)$, one proactive and one reactive. Before we do that, let us
define some terms. The ancestor of $u$ at level $i$, $\anc_i(u)$, is
the node of level at least $i$ reachable from $u$ through a series of
edges connecting nodes to their parents. The {\em connected component
  of $u$ at level $i$}, for $i \leq \lev_p(u)$, denoted $\conn_p(u,i)$
is the connected component containing $u$ of the subgraph of
$\hn_p(V)$ comprising only of edges between vertices $v \in V$ such
that $\lev_p(v) \geq i$ i.e. it is the set of vertices whose are
promoted at least up to level $i$ and that are reachable from $u$
without the path having to touch any node of highest level strictly
less than $i$.

\subsubsection{Proactive routing} In this method we extend the established ideas
of distance vector based approaches. The algorithm proceeds in rounds,
one for each level of the hierarchy, consolidating information from
below and passing it up.
\begin{figure}[ht]
\begin{center}
\fbox{
\begin{minipage}{0.9\columnwidth}
\noindent{Algorithm {\sf hierarchicalDistanceVector}($\hn_p(V)=(V,\edges_p)$)}
\begin{enumerate}
\item $i\leftarrow 0$
\item For all $u \in V, 0 \leq j \leq \lev_p(u)$: $\directory_p(u,j) \leftarrow \emptyset$.
\item While there is a node $u$ with $\lev(u) \geq i$ which is not marked visited
\begin{enumerate}
\item Run a Distance Vector algorithm for $\conn_p(u)$ in which each
  node $v$ exchanges $\directory(v,i)$ with all other nodes.
\item For each $v \in \conn_p(u)$ augment $\directory_p(v,i)$ using
  the information gathered from the previous step and
  mark $v$ visited.
\item Send $\directory_p(v,i)$ to $\parent(v)$.
\item Recompute $\directory_p(\parent(v),i)$ by adding the information
  in $\directory_p(v,i)$. 
\item increment $i$
\end{enumerate}
\end{enumerate}
\end{minipage}
}
\caption{Constructing a routing table pro actively in $\hn_p(V)$.}
\label{fig:proactive-routing}
\end{center}
\end{figure}
The protocol is described in Figure~\ref{fig:proactive-routing}.
The idea is that at each level $i$ each node $v$ such that
$\lev_p(v) \geq i$ computes a distance vector to all the nodes in it's
connected component. But along with the distances to a node $v$ in its
component, each node $u$ also receives a distance vector directory of
all nodes of levels lower than $i$ that can be reached through $v$
without visiting any node with highest level $i$ or above. The node
$u$ first uses this information to update its distance vector since it
is possible that a node with a highest level lower than $i$ is better
reached through some other node at level $i$ rather than through links
going down into the hierarchy below $u$. It then consolidates its
distance vector information and sends it to $\parent(u)$. The process
then repeats at $\lev(u)$.

When a packet needs to be sent from $u$ to $v$, $u$ discovers the best
route by checking its directory at level 0. If it doesn't find $v$ listed it
queries $\parent(u)$ which checks the directory at level 1 and so on
till we reach a level $i$ where $\anc_i(u)$'s directory contains an
entry for $v$. The route discovered first goes from $u$ to $\anc_i(u)$
and then proceeds to the node of $S_i$ listed as the next hop for $v$
in $\anc_i(u)$'s directory.

\subsubsection{Reactive routing} The problem with the proactive routing
scheme is that it requires a lot of storage at each level and
computation in the set up phase. There is also the problem of keeping
the routing tables consistent with the current state of the network,
which might be expensive in a situation which evolves quickly. For
situations where storage may not be abundant, e.g. ad hoc sensor
networks, and situations where nodes come and go quickly, e.g. mobile
networks, we now propose a reactive routing scheme which uses a
hierarchical controlled flooding method for route discovery.

The process of discovering a route from $u$ to $v$ proceeds in two
kinds of phases, up the hierarchy and down the hierarchy. At first the
source node $u$, floods $\conn_p(u,0)$ to find if $v$ is in it. If it
doesn't find it, it sends a message to $\parent(u)$. This node now
floods $\conn_p(\parent(u),1)$ asking for a route to $v$. Each node $v
\in \conn_p(\parent(u),1)$ sends messages down to all $w$
s.t. $\parent(w) = v$ asking them to flood their components at level
$0$ in search of $v$. Subsequently, if $v$ is not found in
$\conn_p(\parent(u),1)$, a message is sent to $\parent(\parent(u))$
and it launches a flood at level 2, and so on.
\begin{figure}[htbp]
\begin{center}
\input{reactive-flood.pstex_t}
\caption{If there is an $i$ such that $\anc_i(u) \in \conn(\anc_i(v),i)$
then only the connected components lying at various levels beneath the
vertices of $\conn(\anc_i(v),i)$ get flooded.}
\label{fig:reactive-flood}
\end{center}
\end{figure}
We do not give a formal description of the algorithm here, pointing
instead to Figure~\ref{fig:reactive-flood} to give the reader an idea
of the region of $\hn_p(V)$ that gets flooded in the process of route
discovery between $u$ and $v$. We believe it is possible to argue that
the number of nodes flooded in this process is $o(n)$ but we do not
have a proof for this.


\section{Properties of hierarchical neighbor graphs}
 
\subsection {Bounded degree}
\label{sec:properties:degree}

We first show that for {\em any} point set $V$ the expected degree of the
subgraph $\hn_p(V)$ is $O(\frac{1}{p(1-p)})$. The proof of this can
easily be adapted to show degree bounds for the case where weights are
general, i.e. $\hn^w_p(V)$. We also show that the degree bound for
$\hn_p(V)$ can be improved to $O(\frac{1}{p})$ when $V$ is generated
by a Poisson point process.

\begin{theorem}
\label{thm:det-deg-expectation}
  The expected degree of any point $v\in V$ in $\hn_p(V)$ constructed with
 parameter $p$ on an arbitrary point set $P$ is at most
\[\frac{1}{p} + \frac{6}{p(1-p)}.\]
\end{theorem}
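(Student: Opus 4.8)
The plan is to bound separately the two ways a vertex $v$ can acquire incident edges in $\hn_p(V)$: as a ``child'' (edges of the form $(v,\parent(v))$, and edges $(v,w)$ to same-level neighbors closer than $\parent(v)$), and as a ``parent'' or same-level target of such edges coming from vertices $u$ with $\lev_p(u)\le \lev_p(v)$. Let $i=\lev_p(v)$. First I would observe that from $v$'s own construction step, $v$ contributes exactly one parent edge plus some number of edges to points of $S_i\setminus S_{i+1}$ lying inside the growing circle of $v$; the $1/p$ term in the bound should come from controlling these, since the circle stops at the first point of level $>i$, and among the points encountered before that, each was independently promoted past level $i$ with probability $p$ (conditioned on the relevant level-$i$ membership), so the expected number of level-$i$ points inside is geometric-like with mean $O(1/p)$. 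I would make this precise by ordering the points of $S_i$ by distance from $v$ and using the independence of the probabilistic promotions.

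For the harder direction — edges that $v$ receives from vertices $u$ of level $\le i$ for which $v=\parent(u)$ or $v$ is a closer-than-parent same-level neighbor — I would use a charging/geometric argument. The key geometric fact is that if $u$ connects to $v$ (as parent or as same-level neighbor within the parent-radius), then $v$ is ``visible'' to $u$ in the sense that no point of level $\ge\lev_p(u)$, other than possibly the endpoints themselves, lies strictly inside the disk around $u$ of radius $d(u,v)$. Standard arguments of the $\theta$-graph / nearest-neighbor-graph type show that for a fixed $v$, the set of points $u$ that can see $v$ in this way, at any one level, is controlled: partitioning the plane around $v$ into six $60^\circ$ cones, in each cone at most one point at each relevant level can have $v$ as its nearest higher-or-equal-level point, because if $u_1,u_2$ are in the same cone with $d(u_1,v)\le d(u_2,v)$ then $d(u_1,u_2)<d(u_2,v)$, contradicting visibility. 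This yields the factor $6$. Then, summing over levels $j\le i$ and using that a point at distance-rank $k$ from $v$ is at level $\ge j$ only with probability decaying geometrically in $k$ (again via the independent promotions), the expected number of such incoming edges at each level is $O(1/(1-p))$ or $O(1/(p(1-p)))$; combined over all levels the geometric series in $p$ collapses to $O(1/(p(1-p)))$, giving the $6/(p(1-p))$ term.

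The main obstacle I anticipate is handling the dependence between the level assignments correctly in the incoming-edge count: the event that $u$ has $v$ as its parent depends on the levels of all points between $u$ and $v$, and these same points' levels are what we condition on when counting $v$'s own outgoing edges, so one must be careful that the two contributions are bounded under a common probability space without double-counting or circular conditioning. I would address this by fixing an arbitrary ordering of $V$ by distance from $v$, exposing the deterministic promotions first (which are fixed by $w$, here $w\equiv 1$, so in fact $\lev_p$ is purely probabilistic in this special case), and then revealing the independent coin flips one point at a time, bounding each contribution by a sum of indicator random variables whose expectations I can read off directly; linearity of expectation then finishes the argument without needing any independence between the two groups of edges. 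A secondary subtlety is the ``on the circumference'' clause allowing possibly several points of $S_{i+1}$ to tie as nearest — this is a measure-zero event for points in general position but the statement says ``arbitrary point set,'' so I would note that ties only add finitely many edges and can be absorbed, or handled by the same six-cone bound applied at level $i+1$.
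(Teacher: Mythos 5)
Your decomposition into outgoing and incoming edges, the geometric-tail argument giving at most $1/p$ expected outgoing edges, and the six-cone partition for the incoming edges are all exactly the paper's approach, and the outgoing half is fine. The incoming half, however, rests on a false claim. You state the visibility condition as ``no point of level $\ge\lev_p(u)$ lies strictly inside the disk around $u$ of radius $d(u,v)$'' and deduce that each cone contributes at most one incoming edge per level. But the construction only stops $u$'s growing circle at a point of level \emph{strictly greater} than $\lev_p(u)$; points of equal level inside the disk block nothing (indeed $u$ connects to all of them). So two points $u_1,u_2$ of level $j$ in the same cone, with no point of level $>j$ nearer to either of them than $v$, both take $v$ as their parent, and your ``at most one per cone per level'' count fails. (Had it been true, it would give a deterministic per-level bound of $6$ and hence expected incoming degree only $6/(1-p)$, strictly better than the claimed theorem --- which should have been a warning sign.)

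The correct use of the cone inequality (Claim~\ref{clm:cones}) is probabilistic, and it is where the missing factor $1/p$ comes from. Order the points of $S_j$ in a fixed cone by distance from $v$. If the $k$-th such point sends an edge into $v$, then none of the $k-1$ nearer points of $S_j$ in that cone can have been promoted to $S_{j+1}$: any one that was promoted would, by the cone inequality, be a strictly closer level-$(j{+}1)$ candidate and would terminate the $k$-th point's search before it reaches $v$. Hence $\pr(v\mbox{ has at least } m \mbox{ incoming edges at level } j)\le(1-p)^{\lceil m/6\rceil}$, so the expected number of incoming edges at level $j$ is at most $6/p$. The remaining factor $1/(1-p)$ then comes from summing over $j$ weighted by $\pr(\lev_p(v)\ge j)=p^j$ (level-$j$ points can only connect to $v$ at level $j$ if $v$ itself is promoted that far), not from any property of the distance rank $k$: your sentence ``a point at distance-rank $k$ from $v$ is at level $\ge j$ only with probability decaying geometrically in $k$'' conflates these two independent sources of geometric decay (a point's level does not depend on its rank). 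With the event identified correctly, the rest of your plan --- indicators plus linearity of expectation, with no independence needed between the two edge classes --- goes through.
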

\begin{proof}
Let us consider some point $v \in P$. For the purposes of analysis we
think of the graph as directed and distinguish between {\em outgoing
  edges} i.e. the edges established by $v$ and {\em incoming edges}
i.e. the edges established to $v$ by other points of $P$. We account
for these two sets separately.

\noindent{\em Outgoing edges.} Let us assume that $v$ is promoted up
to level $i$. Clearly $v$ establishes no outgoing edges till this
level. At this level it stops being promoted and searches for the
point nearest to it that is promoted to level $i+1$. Let us name the
points of $S_i$ ordered by increasing distance from $v$ as $u_1, u_2,
\ldots, u_{|S_i|}$. Clearly for $v$ to have $k$ outgoing
edges, the first $k-1$ vertices in this sequence must not be promoted
past level $i$ and the $k$th vertex must be promoted to level
$i+1$. Hence
\[ \pr(v\mbox{ has $k$ outgoing edges}) = (1-p)^k \cdot p.\]
Therefore
\[\ex(\mbox{No. of outgoing edges from } v) \leq \sum_{k=1}^{|S_i|} k\cdot
(1-p)^{k-1} \cdot p \leq \frac{1}{p}.\]

\noindent{\em Incoming edges.} Now let us consider a level $j \leq i$
i.e. a level at which $v$ exists and may be connected to by other
vertices of $P$. We denote by $X_j$ the number of edges established
into $v$ by points that belong to $S_j \setminus S_{j+1}$ i.e. by
points that are promoted up to $S_j$ but not beyond. Note that at
level $j$ the candidates to establish edges to $v$ are looking for a
vertex of level $j+1$ to connect to. These vertices establish an edge
with $v$ either because none of the vertices nearer to them than $v$
have been promoted to level $j+1$.

In order to analyze this situation, let us consider the sequence
$\sigma(v,j) = u_1, u_2, \ldots, u_{|S_j|}$ of vertices of $S_j$
ordered by their increasing distance from $v$. We partition the space
around $v$ into six cones sub tending angles of $\frac{\pi}{3}$ at $v$
and index these six partitions with the numbers 1 to 6. Further we
partition the sequence $\sigma(v,j)$ into six subsequences
$\sigma(v,j,i) = u_1^i, u_2^i, \ldots, 1 \leq i \leq 6$ where the
points of $\sigma(v,j,i)$ lie entirely within cone $i$.  
\begin{figure}[htbp]
\begin{center}
\input{cones.pstex_t}
\caption{Note that $u^1_3$
  cannot have an edge to $u$ if either $u^1_2$ or $u^1_1$ have a
  higher level.}
\label{fig:cones}
\end{center}
\end{figure}
As is clear visually from Figure~\ref{fig:cones}, using elementary
geometry we can claim that the distance between two points within a cone
is less than the distance between the further of these two points and
$v$ i.e.
\begin{claim}
\label{clm:cones}
Given two points $u_k^i$ and $u_\ell^i$ such that $k < \ell$
\[d(u_k,u_\ell) < d(v,u_\ell).\]
\end{claim}
Note that for $v$ to have at least $m$ incoming edges, at least
$\frac{m}{6}$ of these edges must come from points of one of the
six subsequences, say $\sigma(v,j,i^*)$. In view of Claim~\ref{clm:cones}, this
means that the first $k-1$ points of $\sigma(v,j,i^*)$. Hence we can
say that for $v$ to have at least $m$ incoming edges, the first
$\lceil\frac{m}{6}\rceil$ points of $\sigma(v,j,i^*)$ must not be promoted to
level $j+1$ i.e.
\begin{equation}
\label{eq:degree-incoming}
\pr(v\mbox{ has at least }  m \mbox{ incoming edges}) \leq
(1-p)^{\frac{m}{6}}.
\end{equation}
Therefore $\ex(\mbox{No. of incoming edges to }
v \mbox{ at level } j) \leq \sum_{m=1}^{|S_j|}
(1-p)^{\lceil\frac{m}{6}\rceil} \leq 6 \sum_{t=1}^{\frac{|S_j|}{6}}
(1-p)^t \leq \frac{6}{p}$.

What we have shown above is that $\ex(X_j) \leq \frac{6}{p}, j
\geq 0$. Also, we know that for $X_j$ to be non-zero, $v$ must be
promoted at least till level $j$, which happens with probability
$p^j$. With this in view we calculate $\ex(\sum_{j \geq 0} X_j)$ i.e. the
expectation of the  total number of edges incoming to $v$. So, we have 
\begin{eqnarray*}
\ex\left(\sum_{j \geq 0} X_j\right) & = & \sum_{j \geq 0} p^j \cdot \ex(X_j)\\
& \leq & \frac{6}{p(1-p)}.
\end{eqnarray*}
\end{proof}

The proof described above can easily be adapted to show the following
theorem for $\hn^w_p(V)$ for a general weight function:
\begin{theorem}
\label{thm:weighted-det-deg-expectation}
 The expected degree of any point $v\in V$ in $\hn^w_p(V)$ constructed with
 parameter $p$ on an arbitrary point set $V \subset \RR^2$ is at most
\[\frac{1}{p} + \frac{6}{p}\cdot\left(\log_{\frac{1}{p}} w(v)   + \frac{1}{1-p}\right).\]
\end{theorem}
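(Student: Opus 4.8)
The plan is to mimic the proof of Theorem~\ref{thm:det-deg-expectation}, tracking how the deterministic promotions change the accounting for both incoming and outgoing edges. Fix a vertex $v \in V$ and write $\ell = \lfloor \log_{1/p} w(v) \rfloor$, the deterministic level up to which $v$ is promoted no matter what; beyond level $\ell$, $v$ is promoted only by coin flips. The edges of $v$ are again split into outgoing and incoming, and the key point is that the argument of Claim~\ref{clm:cones} is purely geometric and so still holds verbatim: in any of the six cones around $v$, a point $u_k^i$ blocks all later points $u_\ell^i$ in that cone from connecting to $v$ unless $u_k^i$ itself fails to be promoted.

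For the outgoing edges, the analysis is unchanged: when $v$ stops being promoted at some level $i$ (whatever $i$ is), it connects to a prefix of $S_i$ ordered by distance until it hits a vertex promoted to level $i+1$, and each such vertex is promoted with probability at least $p$ — the deterministic rule only ever helps promotion — so $\ex(\text{outgoing edges}) \le 1/p$ exactly as before. The work is all in the incoming edges. Here the subtlety is that a candidate $u$ at level $j$ might be promoted to level $j+1$ \emph{deterministically} (if $\lfloor \log_{1/p} w(u)\rfloor \ge j+1$), which is good for us — it blocks later points — or it might be the vertex $v$ itself whose presence at level $j$ we need to account for. The first $\lceil m/6\rceil$ points of the critical subsequence $\sigma(v,j,i^*)$ each independently fail to reach level $j+1$ with probability at most $1-p$, so~(\ref{eq:degree-incoming}) survives and $\ex(X_j) \le 6/p$ for every $j \ge 0$, just as in the unweighted case.

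The difference appears when we sum over $j$. In the unweighted proof one uses $\pr(v \in S_j) = p^j$; now $v \in S_j$ with probability $1$ for $j \le \ell$ and with probability $p^{j-\ell}$ for $j > \ell$. Therefore
\begin{eqnarray*}
\ex\left(\sum_{j \ge 0} X_j\right)
 & = & \sum_{j=0}^{\ell} \ex(X_j) + \sum_{j > \ell} p^{j-\ell}\,\ex(X_j)\\
 & \le & (\ell+1)\cdot \frac{6}{p} + \frac{6}{p}\sum_{t \ge 1} p^{t}
   \;\le\; \frac{6}{p}\left(\log_{1/p} w(v) + \frac{1}{1-p}\right),
\end{eqnarray*}
using $\ell + 1 \le \log_{1/p} w(v) + 1$ and absorbing the stray $+1$ into the geometric tail (or, more cleanly, noting $\ell \le \log_{1/p}w(v)$ and bounding the geometric sum $\sum_{t\ge 0}p^t = 1/(1-p)$). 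Adding the outgoing bound $1/p$ gives the claimed $\frac1p + \frac6p\left(\log_{1/p} w(v) + \frac1{1-p}\right)$.

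The only real obstacle is bookkeeping around the deterministic promotions: one must be careful that (i) deterministic promotion of a \emph{blocking} vertex only ever decreases the number of incoming edges, so the bound $\ex(X_j)\le 6/p$ is not endangered, and (ii) the levels $j \le \ell$ at which $v$ is deterministically present are exactly the source of the $\log_{1/p} w(v)$ term, since at those levels $v$ can collect incoming edges with probability $1$ rather than the geometrically decaying $p^j$. Everything else — the cone decomposition, Claim~\ref{clm:cones}, the geometric-series estimates — transfers from the proof of Theorem~\ref{thm:det-deg-expectation} without change.
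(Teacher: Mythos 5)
Your proof is correct and follows exactly the route the paper intends: it adapts the proof of Theorem~\ref{thm:det-deg-expectation}, observes that inequality~(\ref{eq:degree-incoming}) still holds pessimistically since deterministic promotion of blocking vertices only helps, and replaces $\pr(v\in S_j)=p^j$ by $1$ for $j\le\lfloor\log_{1/p}w(v)\rfloor$ and $p^{j-\lfloor\log_{1/p}w(v)\rfloor}$ beyond, which is precisely where the $\log_{1/p}w(v)$ term comes from. The paper omits these details, and your level-by-level bookkeeping (including the absorption of the stray $+1$ into $\frac{1}{1-p}$) supplies them correctly.
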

We omit the proof here because it essentially follows the proof of
Theorem~\ref{thm:det-deg-expectation}, pointing out to the reader that
inequality~(\ref{eq:degree-incoming}) holds as a pessimistic
estimation of the probability in the weighted case, since the
deterministic promotion of weighted nodes would make them higher in
general than nodes of weight 1.

For point sets generated by a Poisson point process we show a better
bound of $\theta(\frac{1}{p})$. We note that it is quite expected that
the degree bound does not contain the intensity of the point process
$\lambda$ in it, since $\hn^w_p(V)$ is essentially a nearest neighbor
model.
\begin{theorem}
\label{thm:rand-deg-expectation}
The expectation of the degree of $\hn_p(V)$ constructed with parameter $p$ on
a Poisson point process $V$ with density $\lambda$ is at most $\frac{7}{p}$.
\end{theorem}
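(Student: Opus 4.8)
The plan is to keep the outgoing/incoming split used in the proof of Theorem~\ref{thm:det-deg-expectation} but to exploit two features of the Poisson model that an arbitrary point set lacks. First, by the Slivnyak--Mecke theorem I would condition on $v\in V$ and treat the remaining points as a homogeneous Poisson process of intensity $\lambda$. Second, since the level assignment is an independent marking and, with all weights equal to $1$, there is no deterministic promotion, the set $S_j$ of points promoted to level at least $j$ is again a Poisson process, of intensity $\lambda p^j$, and, conditioned on $S_j$, each of its points is retained in $S_{j+1}$ independently with probability $p$.

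Granting this, the outgoing side needs nothing new: the bound of $1/p$ obtained in the proof of Theorem~\ref{thm:det-deg-expectation} uses no property of the point set and still applies. The work is to sharpen the incoming bound from $6/(p(1-p))$ to $6/p$. I would re-run the six-cone argument and read off a better per-level estimate. Fix a level $j$ and condition on $\lev_p(v)\ge j$, an event of probability $p^j$. Split the plane around $v$ into six cones of angle $\frac{\pi}{3}$ and, inside a fixed cone, list the points of $S_j$ in order of increasing distance from $v$ as $u_1,u_2,\ldots$. By Claim~\ref{clm:cones}, $d(u_a,u_k)<d(v,u_k)$ whenever $a<k$, so if $u_k$ sends an incoming edge to $v$ at level $j$ then $u_k\in S_j\setminus S_{j+1}$ \emph{and} none of $u_1,\ldots,u_{k-1}$ lies in $S_{j+1}$, for otherwise some point of level $>j$ would sit strictly inside $u_k$'s growing circle before it reaches $v$. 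Since the thinning from $S_j$ to $S_{j+1}$ is independent of the positions of the points (hence of their distance order), this requires $k$ independent failures, so $\Pr[u_k\text{ contributes}]\le(1-p)^k$. Summing the geometric series and over the six cones gives at most $6(1-p)/p$ incoming edges in expectation at level $j$, conditioned on $\lev_p(v)\ge j$; removing the conditioning costs the factor $p^j$, and $\sum_{j\ge0}p^j\cdot 6(1-p)/p=6/p$. Together with the outgoing $1/p$ this gives $7/p$.

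I expect the real difficulty to be making the probabilistic bookkeeping airtight rather than any single inequality. Three points need care: (i) that the Poisson property and the intensities $\lambda p^j$ genuinely survive the iterated thinning, which is where the all-weights-$1$ hypothesis is used, so that promotion is purely random; (ii) that $\{\lev_p(v)\ge j\}$ is independent of the coin flips thinning the \emph{other} points of $S_j$ into $S_{j+1}$, so the conditioning contributes exactly the factor $p^j$; and (iii) that a cone contains infinitely many points of $S_j$ almost surely, so the cone count must be treated as $\sum_{k\ge1}\Pr[u_k\text{ exists and contributes}]$, with the geometric bound still valid term by term. As a cross-check, and a route I would also try, one can integrate directly: bound the degree of $v$ by the number of edges $v$ makes plus the number made to it, use Mecke's equation to write each as $\int_{\RR^2}\lambda\,\Pr[\,\cdot\,]\,\mathrm{d}x$, note that the relevant event factors through the independent levels of $v$ and of the point at $x$ and a single void probability for a disk of radius $|x-v|$, and evaluate $\int_{\RR^2}e^{-\lambda p^{j+1}\pi|x-v|^2}\,\mathrm{d}x=(\lambda p^{j+1})^{-1}$; the $\lambda$'s cancel, each count comes out to $1/p$, and the bound $2/p\le 7/p$ follows.
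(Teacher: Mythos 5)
Your proposal is correct and reaches the same bound, but the heart of the argument is carried out by a genuinely different route. The paper keeps your outgoing/incoming split, the same six cones of angle $\pi/3$, and the same weighting of level $j$ by $p^j$; where it differs is in bounding the per-cone, per-level expectation. The paper computes this as an explicit Poisson integral: it writes the probability that the $(i+1)$st nearest $S_j$-point of the cone lies in an annulus of width $dr$ at radius $r$, multiplies by the thinning probabilities $(1-p)^{i+1}$, and integrates over $r$ and sums over $i$ to get $\ex(X_j^1)\leq\frac{1-p}{p}$. You instead condition on the positions (equivalently on the composition of $S_j$) and observe that the event ``$u_k$ contributes'' forces $k$ independent promotion failures, giving $\pr[u_k\mbox{ contributes}]\leq(1-p)^k$ and the same $\sum_{k\geq 1}(1-p)^k=\frac{1-p}{p}$ term by term. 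Your version is cleaner, and it buys something the paper's does not: it nowhere uses the Poisson hypothesis, only the conditional independence of the level coins given membership in $S_j$, so it actually proves the $\frac{7}{p}$ bound for an \emph{arbitrary} point set. This exposes the fact that the gap between the general bound $\frac{1}{p}+\frac{6}{p(1-p)}$ of Theorem~\ref{thm:det-deg-expectation} and the Poisson bound of Theorem~\ref{thm:rand-deg-expectation} is an artifact of the lossier tail-bound bookkeeping in the former (which passes through $\pr(X_j\geq m)\leq(1-p)^{\lceil m/6\rceil}$ and drops the factor $(1-p)$ from the geometric sum) rather than a genuine feature of the Poisson model. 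Your Mecke-formula cross-check is also sound and in fact yields the sharper value $\frac{2}{p}$ for the Poisson case, since the relevant void event for a directed edge between $x$ and $v$ depends only on the level-$(>j)$ points in the open disk $B(x,d(x,v))$, independently of the levels of $x$ and $v$ themselves; this is a legitimately stronger result than the theorem claims, though it does rely on Slivnyak--Mecke and hence on the Poisson assumption.
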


\begin{proof}
As before we account separately for outgoing edges and incoming
edges. We inherit the bound for outgoing edges from the proof of
Theorem~\ref{thm:det-deg-expectation} where the setting is more
general. We focus here on improving the bound for incoming edges.

Consider a vertex $v$ that has been promoted up to level $i$. Consider
a level $j\leq i$. We denote by $X_j$ the number of edges established
into $v$ by points that belong to $S_j \setminus S_{j+1}$ i.e. by
points that are promoted up to $S_j$ but not beyond. 

We partition the space around $v$ into six cones sub tending angles
$\frac{\pi}{3}$ at $v$ and number them from 1 to 6. Consider a point
$u$ at a distance $r$ from $v$ in cone $k;1\leq k \leq 6$. Let $X_j^k$
denotes the edges from vertices in cone $k$. By the symmetry of the
cones and linearity of expectation we can say that
\[\ex(X) = 6\cdot\ex(X_1).\] 
So, we focus on one of these cones and compute an upper bound on the
expected number of edges incoming to $v$ from this cone. In view of
Claim~\ref{clm:cones}, for a point $u \in S_j \setminus S_{j+1}$ to
connect to $v$ all the points of $S_j$ in cone 1 which are nearer to
$v$ than $u$ must not be promoted to level $j+1$ i.e. they must all
belong to $S_j \setminus S_{j+1}$. Note that this is an upper bound
since we disregard the points outside the cone which might be closer
to $u$ than $v$ which might be promoted to $S_{j+1}$ and also the
points within the cone but further from $v$ than $u$ which might be
closer of $u$ than $v$ is and which might be promoted to
$S_{j+1}$. Both these kinds of points would prevent $u$ from sending
an edge into $v$ but we disregard them since we are only looking for
an upper bound.

In order to compute this upper bound on $\ex(X_j^1)$ we consider a
segment of the cone at a distance $r$ from $v$ which infinitesimal width
$dr$. Since the area of this strip is infinitesimal, the probability
that there are 2 or more points in this strip is $o(dr)$ and hence can
be neglected. Hence the expected number of edges from this strip into
$v$ can be upper bounded by the probability that there is an edge from
within this strip going into $v$. This is computed by computing the
intersection of the events ``the $i+1$st nearest neighbor of $v$ in
cone 1 is in the strip of width $dr$ at distance $r$'' and ``none of
the $i$ points of $S_j$ in the sector of radius $r$ in cone 1 are
promoted to level $j+1$''. Since these two events happen on disjoint
areas they are independent. To compute the expectation of $X_j^1$ we
simply integrate over all values of $r$ from 0 to $\infty$ for each
value of $i$ and sum over all values of $i$ from 0 to $\infty$ i.e.
\begin{eqnarray*}
  \ex(X_j^1) &\leq&  (1-p) \cdot \int_0^\infty \sum_{i=0}^{\infty} \frac{e^{-\lambda p^j
      \frac{\pi}{6}r^2} \cdot (\lambda p^j\frac{\pi}{6}
    r^2)^i(1-p)^i}{i!}\\
& & \cdot \lambda p^j\frac{\pi}{3}rdr\\
  & = & (1-p) \cdot \int_0^\infty e^{-\lambda p^j \frac{\pi}{6}r^2 p} \cdot\lambda
  p^j \frac{\pi}{3}rdr\\
  & = & \frac{1-p}{p}
\end{eqnarray*}

So we get that
\begin{eqnarray*}
\ex\left(\sum_{j\geq 0}X_j\right) & \leq & 6 \sum_{j\geq 0} p^j\cdot E(X_j^1)\\
& \leq & 6 \cdot \frac{1}{1-p}\cdot \frac{1-p}{p}\\
& = & \frac{6}{p}.
\end{eqnarray*}
\end{proof}

As before, there is a theorem analogous to Theorem~\ref{thm:rand-deg-expectation} for the case where weights are general. We state that theorem here, noting that we do not use the terminology of a Point process in $\RR^2$ since that would have, in general, an infinite number of points and if the weight of each of these points were considered to be at least some $\epsilon >0$, that would amount to an assumption of infinite energy.  \begin{theorem} \label{thm:weighted-rand-deg-expectation}
 The expected degree of any point $v\in V$ in $\hn^w_p(V)$ constructed with
 parameter $p$ on a finite set of points $V$ distributed uniformly at
 random in a bounded area $A \subset \RR^2$ is at most
\[\frac{1}{p} + \frac{6}{p}\cdot\left(\log_{\frac{1}{p}} w(v)   + 1\right).\]
\end{theorem}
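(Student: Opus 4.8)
The plan is to keep the directed‑graph bookkeeping of the earlier proofs: fix a vertex $v$ and bound separately the expected number of edges $v$ establishes (outgoing) and the expected number established to $v$ (incoming). For the outgoing edges I expect nothing new is needed. If $v$ is at level $i$ and we list the points of $S_i\setminus\{v\}$ by increasing distance from $v$, then $v$ has at least $k$ outgoing edges only if the first $k-1$ of these points avoid $S_{i+1}$; and for any point, conditioned on its being in $S_i$, its probability of avoiding $S_{i+1}$ is at most $1-p$ (exactly $1-p$ for a weight‑$1$ point, and $0$ for a point whose deterministic level already exceeds $i$). So, just as in Theorem~\ref{thm:det-deg-expectation}, the expected outgoing degree is at most $\sum_{k\ge1}(1-p)^{k-1}=1/p$ whatever the weights. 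The whole burden is on the incoming edges; the new feature there is that a heavy $v$ belongs to $S_0,\dots,S_\ell$ with certainty, where $\ell=\lfloor\log_{\frac1p}w(v)\rfloor$, so the geometric decay of $\pr(v\in S_j)$ that drove the unweighted sum is lost for $j\le\ell$.

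For the incoming edges I would define $X_j$, as in the proof of Theorem~\ref{thm:rand-deg-expectation}, to be the number of edges into $v$ from vertices of level exactly $j$. Because an edge into $v$ from a level‑$j$ vertex forces $v\in S_j$ (either $v$ is that vertex's parent, hence in $S_{j+1}\subseteq S_j$, or $v$ itself has level $j$), one can write $\ex(X_j)=\pr(v\in S_j)\,\ex(X_j\mid v\in S_j)$, with $\pr(v\in S_j)=1$ for $j\le\ell$ and $\pr(v\in S_j)=p^{\,j-\ell}$ for $j>\ell$. The aim is then the weight‑free bound $\ex(X_j\mid v\in S_j)\le 6(1-p)/p$. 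I would obtain it by partitioning the plane around $v$ into six cones of angle $\pi/3$ and, via Claim~\ref{clm:cones} exactly as before, reducing to showing that the conditional expected number of level‑$j$ edges into $v$ from a single cone is at most $(1-p)/p$: if $q_1,q_2,\dots$ lists the points of $V\setminus\{v\}$ in that cone by increasing distance from $v$, then the claim forces that whenever the level‑$j$ vertex $q_t$ connects to $v$, none of $q_1,\dots,q_{t-1}$ is promoted to $S_{j+1}$, so by independence of the (weight‑dependent but position‑independent) promotions this conditional expectation is at most $\sum_{t\ge1}a_t\prod_{s<t}b_s$, where $a_t=\pr(q_t\in S_j\setminus S_{j+1})$ and $b_s=\pr(q_s\notin S_{j+1})$.

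The crux — and the step replacing the Poisson integral of Theorem~\ref{thm:rand-deg-expectation}, which does not transfer because in this model the $S_j$‑points no longer form a homogeneous process — is the observation that $\pr(q\in S_{j+1}\mid q\in S_j)\ge p$ for \emph{every} point $q$ regardless of weight (it equals $p$ when $q$'s deterministic level is at most $j$ and equals $1$ otherwise). Together with $a_t\le(1-p)\pr(q_t\in S_j)$ this gives $b_t\le 1-\tfrac{p}{1-p}\,a_t$, hence $a_tB_{t-1}\le\tfrac{1-p}{p}(B_{t-1}-B_t)$ for the partial products $B_t=\prod_{s\le t}b_s$, and the series telescopes to at most $(1-p)/p$, uniformly over the point placement; this yields $\ex(X_j\mid v\in S_j)\le 6(1-p)/p$. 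Summing over $j$ gives $\sum_{j\ge0}\ex(X_j)\le\tfrac{6(1-p)}{p}\bigl((\ell+1)+\tfrac{p}{1-p}\bigr)=\tfrac{6(1-p)(\ell+1)}{p}+6$, and since $\ell+1\le\log_{\frac1p}w(v)+1$ and $\log_{\frac1p}w(v)\ge0$ (because $w(v)\ge1$) this is at most $\tfrac6p\bigl(\log_{\frac1p}w(v)+1\bigr)$; adding the $1/p$ for outgoing edges gives the claimed bound. The main obstacle is exactly the heterogeneity of the points' weights: the ``pessimistic'' reasoning used for Theorem~\ref{thm:weighted-det-deg-expectation} — heavier neighbours only make the survival estimate smaller — is the right instinct, and the telescoping inequality is the clean device for letting it absorb the extra $\log_{\frac1p}w(v)$ factor. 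Should that prove awkward, a fallback is a stochastic‑domination argument: conditioned on $v\in S_j$, increasing any other point's weight only increases blocking and so cannot increase $X_j$, which reduces matters to the case where every point other than $v$ has weight $1$, i.e.\ essentially to the binomial analogue of Theorem~\ref{thm:rand-deg-expectation}.
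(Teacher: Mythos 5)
Your proof is correct, and it is in fact more substantial than what the paper provides: the paper omits the proof of this theorem entirely, saying only that it ``follows the proof of Theorem~\ref{thm:weighted-det-deg-expectation}'', which is itself only a remark on Theorem~\ref{thm:det-deg-expectation}. The genuinely new ingredient you supply is the telescoping bound $\sum_t a_t\prod_{s<t}b_s\le\frac{1-p}{p}$, derived from the pointwise inequality $\pr(q\in S_{j+1}\mid q\in S_j)\ge p$ (valid for every weight), which correctly replaces the Poisson integral of Theorem~\ref{thm:rand-deg-expectation} --- an integral that indeed does not transfer to a finite uniform sample with heterogeneous weights --- and does so uniformly over the positions of the points. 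Two consequences are worth flagging. First, because your argument never uses the randomness of the positions, you have actually proved the stated bound for an \emph{arbitrary} finite point set, which is stronger than the theorem as stated and also improves the constant $\frac{1}{1-p}$ of Theorem~\ref{thm:weighted-det-deg-expectation} to $1$. Second, specializing to unit weights, your per-level, per-cone expectation bound of $(1-p)/p$ improves the incoming-degree term of Theorem~\ref{thm:det-deg-expectation} from $\frac{6}{p(1-p)}$ to $\frac{6}{p}$; the loss in the paper's argument comes from the cruder tail estimate $\pr(\mbox{at least } m \mbox{ incoming edges})\le(1-p)^{m/6}$, which your direct per-cone expectation computation avoids. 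In a final write-up you should make explicit that an edge into $v$ from a level-$j$ vertex forces $v\in S_j$ and that conditioning on this event leaves the other points' promotions untouched, and that the closing step $\frac{6(1-p)(\ell+1)}{p}+6\le\frac{6}{p}\bigl(\log_{\frac{1}{p}}w(v)+1\bigr)$ uses $w(v)\ge 1$; both are true and you note both.
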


The proof of this theorem follows the proof of
Theorem~\ref{thm:weighted-det-deg-expectation} and is omitted here.

\paragraph{Discussion} The result of
Theorem~\ref{thm:rand-deg-expectation} assumes significance when seen
in light of Ballister et. al.'s result on connectivity of $k$-nearest
neighbor graphs~\cite{ballister-aap:2005}. They considered $k$-nearest
neighbor graphs on point sets in $\RR^2$ i.e. each node establishes
edges with its $k$ nearest neighbors. Improving on a result of Xue and
Kumar~\cite{xue-wn:2004}, Ballister et. al. showed that for a Poisson
point process with $\lambda =1$ on a square of area $n$, the
probability of the $k$-nearest neighbor graph being connected tends to
$0$ as $n \rightarrow \infty$ for $k \leq \lfloor 0.3043 \log n
\rfloor$. In contrast, our $p$-hierarchical neighbor graphs achieve
connectivity, even within a square of area $n$ with an expected degree
not depending on $n$. 

Also, we note that nodes in structures like
$\theta$-graphs~\cite{keil-swat:1988,ruppert-cccg:1991} and
Yao~\cite{yao-sjc:1982} graphs have constant outgoing degree but may
have arbitrarily high incoming degree. In fact several papers have
been devoted to constructing constant degree versions of these
structures at some computational expense (see~\cite[Chapter
9]{li-cup:2008} for a thorough
treatment). Theorem~\ref{thm:det-deg-expectation} does not preclude
the possibility of a node having high degree in hierarchical neighbor
graphs, the bound on degree is only in expectation, but it has the
advantage that the kind of positioning of points that prove
pathological for $\theta$-graph-like structures, still retain some
probability of having bounded node degree, at no extra computation
cost.

\subsection{Expected edge length in $\hn_p(V)$}
\label{sec:properties:length}
Since nodes in wireless networks are driven by (limited) power
batteries, their transmission ranges are finite. A network
architecture ignoring this limitation is impractical. In order to
show that hierarchical neighbor graphs are sensitive to this
constraint we will show that the probability of long connections being
formed in $\hn_p(V)$ is very low when $V$ is generated by a Poisson
point process of density $\lambda$. As a consequence of this we expect
that for every value of $r > 0$, the radius-bounded hierarchical
neighbor graph $\hnbar^{w,r}_p(V)$ is connected as long as $\lambda$
is sufficiently high. We demonstrate by simulation that this is indeed
the case. In fact, for every $r > 0$ there is a value
$\lambda_{\min}(r)$ such that $\hnbar^{I,r}_p(V)$ is connected for all
$\lambda > \lambda_{\min}(r)$, for the special weight function $I$
that assigns weight 1 to all nodes.

We begin by bounding the probability that 2 nodes which are distance
$l$ apart are connected directly.
\begin{proposition}
  Consider $\hn_p(V)$ constructed on a point set $V$ generated by a
  Poisson point process of density $\lambda$. Given any two nodes
  $u,v\in V$, with $d(u,v)=l$, the probability that $u$ forms an edge with $v$ or vice
  versa in $\hn_p(V)$, denoted $\rec_{\direct}(l)$, is upper
  bounded as
\[
\rec_{\direct(l)} \leq  \frac{2(1-p)}{(\lambda \pi l^2
  p)^2} \cdot \left\{\frac{1-e^{-\lambda \pi l^2 p} (\lambda \pi
    l^2 p +1)}{log(1/p)}  + \frac{4}{e^2} \right\}.
\]
\end{proposition}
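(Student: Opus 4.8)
The plan is to decompose the event ``$\{u,v\}$ is an edge'' according to which clause of the construction produces it, bound each piece by first conditioning on the levels of $u$ and $v$, sum over levels, and finally estimate the resulting series by a matching integral. Since all weights are $1$, the deterministic promotion never fires, so $\lev_p(\cdot)$ is an ordinary geometric variable with $\Pr(\lev_p(w)=j)=p^{j}(1-p)$, independently over all points. The edge $\{u,v\}$ is present iff one of: (E1) $\lev_p(u)=\lev_p(v)$ and $v$ lies inside $u$'s circle; (E2) the mirror image of (E1); (E3) $\lev_p(v)>\lev_p(u)$ and $v=\parent(u)$; (E4) $\lev_p(u)>\lev_p(v)$ and $u=\parent(v)$. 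The events $E_1\cup E_2$, $E_3$, $E_4$ are disjoint (they correspond to $\lev_p(u)$ being $=,<,>\lev_p(v)$), so $\rec_{\direct}(l)=\Pr(E_1\cup E_2)+\Pr(E_3)+\Pr(E_4)$ and $\Pr(E_1\cup E_2)\le 2\Pr(E_1)$ by symmetry. For a fixed common level $j$, ``$v$ inside $u$'s circle'' means no point of level $>j$ lies strictly inside the open disk $D(u,l)$; conditioning on points at $u$ and $v$ leaves the rest of the process Poisson, the count inside $D(u,l)$ is $\mathrm{Poisson}(\lambda\pi l^{2})$, and each such point independently has level $\le j$ with probability $1-p^{j+1}$, so the conditional probability is $e^{-\lambda\pi l^{2}p^{j+1}}$. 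Hence $\Pr(E_1)=(1-p)^{2}\sum_{j\ge0}p^{2j}e^{-\lambda\pi l^{2}p^{j+1}}$, and the same Poisson-thinning argument (now with $\lev_p(v)\ge j+1$ and no \emph{other} level-$\ge j+1$ point strictly inside $D(u,l)$) gives $\Pr(E_3)=\Pr(E_4)=p(1-p)\sum_{j\ge0}p^{2j}e^{-\lambda\pi l^{2}p^{j+1}}$. Summing, the $p$ and $1-p$ factors collapse to
\[
\rec_{\direct}(l)\ \le\ 2(1-p)\sum_{j\ge0}p^{2j}\,e^{-\lambda\pi l^{2}p^{j+1}}\;=\;2(1-p)\sum_{j\ge0}p^{2j}e^{-ap^{j}},\qquad a:=\lambda\pi l^{2}p .
\]

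It then remains to show $S:=\sum_{j\ge0}p^{2j}e^{-ap^{j}}\le a^{-2}\big(\,(1-e^{-a}(a+1))/\log(1/p)+4/e^{2}\,\big)$. With $g(x):=p^{2x}e^{-ap^{x}}$, the substitution $t=p^{x}$ gives $\int_{0}^{\infty}g(x)\,dx=\frac{1}{\log(1/p)}\int_{0}^{1}te^{-at}\,dt=\frac{1-e^{-a}(a+1)}{a^{2}\log(1/p)}$, which is exactly the first term, and the substitution $s=p^{x}$ shows $\max_{x\in\RR}g(x)=\max_{s>0}s^{2}e^{-as}=4/(a^{2}e^{2})$, attained at $s=2/a$. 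As a function of the real variable $x$, $g$ is unimodal (increasing then decreasing), so a sum-versus-integral comparison gives $S\le\int_{0}^{\infty}g(x)\,dx+(\text{error})$; I would bound the error by comparing $g(m)+g(m+1)$, at the two integers $m,m+1$ flanking the peak, against $\int_{m}^{m+1}g$, noting that on that interval $g$ stays above $g(m)$ on one side of the peak and above $g(m+1)$ on the other, so the error is at most a convex combination of $g(m)$ and $g(m+1)$ and hence at most $\max g=4/(a^{2}e^{2})$; if the peak lies at $x<0$ (equivalently $a<2$) the same bound follows from $g(0)=e^{-a}\le4/(a^{2}e^{2})$, which is just the elementary fact $\max_{a>0}a^{2}e^{-a}=4e^{-2}$. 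Putting the pieces together and substituting $a=\lambda\pi l^{2}p$ yields the claimed bound.

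The step I expect to be the main obstacle is this last one: getting the sum-versus-integral error down to a \emph{single} copy of $\max g$, rather than the two copies a naive term-by-term comparison produces, which is precisely what the claimed constant $4/e^{2}$ demands (and what forces the careful treatment of the two summands flanking the peak, together with the separate small-$a$ case). Everything preceding it is routine: the level law is elementary, the disk-emptiness probabilities are standard Poisson thinning, and the integral evaluates in closed form.
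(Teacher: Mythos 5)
Your proposal is correct and follows essentially the same route as the paper: conditioning on the levels of $u$ and $v$, using Poisson thinning to get the disk-emptiness probability $e^{-\lambda\pi l^2 p^{j+1}}$, collapsing the level sums to $2(1-p)\sum_{i\ge 0}p^{2i}e^{-\lambda\pi l^2 p^{i+1}}$, and then bounding the sum by the corresponding integral plus the maximum $4/(a^2e^2)$ of the summand. You actually supply more detail than the paper does on the final sum-versus-integral step (which the paper dismisses with ``we upper bound the summation by integrating''), and your unimodal comparison with the single extra copy of $\max g$ is exactly what is needed to recover the stated constant.
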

\begin{proof}
 We use $\rec_{\direct}(l)$ to represent this
probability. Consider 2 nodes $u$ and $v$ s.t. $lev_p(u)=i$,
$lev_p(v)=j$ and $d(u,v)=l$. Note that if $i<j$, $u$ connects to $v$
iff there is no node $w$ s.t. $lev_p(w)>i$ and $d(u,w)<d(u,v)$. If
$i=j$, $u$ and $v$ have an edge iff there is no node $w$
s.t. $lev_p(w)>i$ and $d(u,w)<d(u,v)$ or $d(w,v)<d(u,v)$.
\begin{eqnarray*}
\rec_{\direct}(l) & \leq & \sum_{i=0}^\infty \sum_{j\neq i} p^i(1-p)
p^j (1-p) e^{-\lambda \pi l^2 p^{\min\{i,j\}+1}}\\
& &  + \sum_{i=0}^\infty 2 p^{2i}(1-p)^2 e^{-\lambda \pi l^2 p^{i+1}}\\
& = & 2(1-p)\cdot (\sum_{i=0}^\infty  p^{2i} e^{-\lambda \pi l^2 p^{i+1}})
\end{eqnarray*}
Now we upper bound the summation by integrating over the function
$p^{2x} e^{-\lambda \pi l^2 p^{x+1}}$, giving us the result.
\end{proof}

In the radius-bounded hierarchical neighbor graph, $\hnbar_p^{w,r}(V)$
we go a step further towards incorporating real-life constraints and
remove all connections between nodes more than a certain distance
apart. Our simulations showed that for $\hnbar_p^r(V)$ built on a
uniformly and randomly distributed points is still connected for all
$r > 0$, provided the density of these points exceeds a minimum value
that depends on $r$. To investigate the relationship between this
minimum density ($\lambda_{\min}$) and transmission radius ($r$), we
simulated $\hnbar^r_p(V)$ increasing the density of the points
till we achieved a connected network. In order to verify that the
value we determined as $\lambda_{\min}$ was not an anomaly, we ran the
simulation for 10 increments of $\lambda$ past the first value where a
connected network was achieved, and only fixed $\lambda_{\min}$ when
the network was found to be connected for all these 10 increments.
\begin{figure}[htbp]
\begin{center}
\includegraphics[scale=0.8]{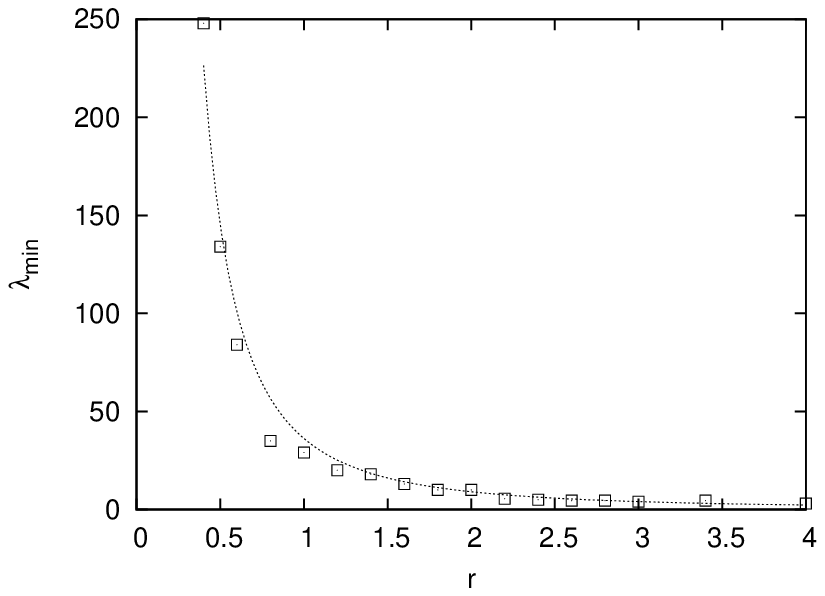}
\caption{For $\hnbar_{0.5}^r(V)$ the curve is $\lambda_{\min} r^2 = 3.62$. (Best fit)}
\label{fig:properties:hnbar}
\end{center}
\end{figure}
Figure~\ref{fig:properties:hnbar} shows the dependency of
$\lambda_{\min}$ on $r$. We fixed $p=0.5$. The point set was scattered
randomly within a $10\times 10$ square. On the Y-axis is
$\lambda_{\min}$ (above which network is connected) plotted against
$r$, the max transmission radius allowed. We found the relation to be
of the form $\lambda_{\min}.r^2 = c$, $c$ being a constant. This is
reminiscent of the critical phenomena of unit disk graphs built on
Poisson point processes. If the radius of each disk is $r$, there is a
critical density $\lambda_c(r)$ above which the unit disk graph has an
infinite component. And it can be shown that $\lambda_c(r).r^2$ is a
constant for $r > 0$ (See \cite{meester:1996} for details). We feel
such a theorem may exist for $\hnbar^r_p(V)$.


\subsection{Number of hops}
\label{sec:properties:hops}
To bound communication delay in networks, one must construct a
topology with a small hop-stretch factor. Hop spanners were introduced
by Peleg and Ullman ~\cite{peleg-podc:1989} and were used as network
synchronizers. In this section we analyze the hop stretch in hierarchical neighbor
graphs. 

For finite point sets and Poisson point processes limited to finite
regions it is easy to show exponential decay of the number of hops
between points. We introduce the notation $\height(\hn_p^W(V)) =
\max_{u\in V} \lev_p(u)$. For finite point sets, we claim the
  following theorem, that follows easily from standard skip list analysis.
\begin{theorem}
\label{thm:properties:height}
For $\hn^w_p(V)$ constructed on a finite set $V$, define $W(V) =
\sum_{u \in V} w(u)$. For $k \geq \max_{u \in V} \log_{\frac{1}{p}} w(u)$,
\[\pr(\height(\hn^w_p(V) \geq k)\leq W\cdot p^k.\] 
\end{theorem}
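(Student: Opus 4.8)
The plan is to combine a union bound over the vertices with a per-vertex geometric tail estimate, exactly the pattern used in standard skip-list analysis. Because the sets $S_0 \supseteq S_1 \supseteq \cdots$ are nested, a vertex $u$ satisfies $\lev_p(u) \geq k$ if and only if $u \in S_k$, so $\{\height(\hn^w_p(V)) \geq k\}$ is the event $\bigcup_{u\in V}\{u\in S_k\}$ and hence
\[
\pr\big(\height(\hn^w_p(V)) \geq k\big) \;\leq\; \sum_{u \in V} \pr(u \in S_k).
\]
It therefore suffices to show $\pr(u \in S_k) \leq w(u)\, p^k$ for each fixed $u$, since summing this bound gives $\sum_{u}w(u)p^k = W p^k$.

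For the per-vertex bound I would unwind the two clauses of the promotion rule. Put $m_u = \lfloor \log_{\frac{1}{p}} w(u)\rfloor$, which, using the normalisation $w(u)\geq 1$ (so $\log_{\frac{1}{p}}w(u)\geq 0$), is exactly the number of levels through which $u$ is promoted deterministically: $u\in S_i$ with certainty for every $i\leq m_u$. For $i > m_u$ the vertex $u$ is not deterministically promoted, so $\{u\in S_i\}$ happens iff $\{u\in S_{i-1}\}$ happens together with an independent success of a probability-$p$ coin, these coins being independent across levels. The hypothesis $k\geq \max_{u}\log_{\frac{1}{p}}w(u)$ guarantees $k\geq m_u$, so unrolling from level $m_u$ up to level $k$ yields
\[
\pr(u\in S_k) \;=\; p^{\,k-m_u} \;=\; p^{k}\cdot p^{-m_u} \;\leq\; p^{k}\cdot p^{-\log_{\frac{1}{p}}w(u)} \;=\; w(u)\,p^{k},
\]
where the inequality uses $m_u \leq \log_{\frac{1}{p}} w(u)$ together with $p<1$, and the last equality is the identity $p^{-\log_{\frac{1}{p}}w(u)} = w(u)$.

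Substituting this estimate into the union bound completes the proof. There is no serious obstacle: the argument is the textbook ``geometric tail per element, then union bound'' pattern, which is why the statement can be attributed to standard skip-list analysis. The only point that deserves a line of care is checking that above level $m_u$ the promotions of $u$ genuinely form $k-m_u$ independent $\mathrm{Bernoulli}(p)$ trials, so that $\pr(u\in S_k)=p^{k-m_u}$ holds with equality, and that the condition $k\geq\max_u\log_{\frac{1}{p}}w(u)$ is precisely what keeps us out of the degenerate regime $k<m_u$, where $\pr(u\in S_k)=1$ and one would instead note that $\log_{\frac{1}{p}}w(u) > k$ forces $w(u) > p^{-k}$, again giving $\pr(u\in S_k)\le w(u)p^k$. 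Both facts are immediate from the model definition.
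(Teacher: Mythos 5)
Your proof is correct and follows essentially the same route as the paper: a per-node geometric tail bound $\pr(u\in S_k)\leq w(u)p^k$ (obtained from the deterministic promotion up to $\lfloor\log_{1/p}w(u)\rfloor$ followed by independent Bernoulli$(p)$ promotions) combined with a union bound over $V$. If anything, your handling of the floor in the exponent is slightly more careful than the paper's one-line version.
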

\begin{proof}
For a given node $u \in V$, the probability that $\lev_p(u) > k$, for
any $k \geq \max_{u \in V} \lfloor \log_{\frac{1}{p}} w(u) \rfloor$ is $p^{k -
  \log_{\frac{1}{p}} w(u)}$. And since the height of $\hn^w_p(V)$ is
at least $k$ if there is at least one node with height at least $k$,
applying the union bound on probabilities we get that
$ \pr(\height(\hn^w_p(V) \geq k) \leq \sum_{u\in V} p^{k -
  \log_{\frac{1}{p}} w(u)} = W\cdot p^k.$
\end{proof}
The nodes at level $\height(\hn^w_p(V))$ are fully connected to each
other (since they are not able to find a node at a higher level). This
means that in $\hn_p(V)$ on a finite set $V$, for any two nodes to connect to
each other the path has to travel at most $O(\height(\hn^w_p(V))$
levels up and down the hierarchy. Hence
Theorem~\ref{thm:properties:height} implies that the number of hops
between any two nodes has an exponentially decaying distribution.

For point processes limited to finite regions, the following
theorem gives us the result that the number of hops between points
decays exponentially:
\begin{theorem}
\label{thm:properties:height-poisson}
Given a finite region $A \subseteq \RR^2$, with area $\ell(A)$, and a
set of points $V$ in this region generated by a Poisson point process
of density $\lambda$, then for any $k \geq 0$,
\[\pr(\height(\hn_p(V) \geq k)\leq \lambda \cdot\ell(A)\cdot p^k.\] 
\end{theorem}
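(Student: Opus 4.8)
The plan is to exploit the independent-thinning (colouring) property of the Poisson point process together with the fact that, since all weights equal $1$, the promotion of each point is governed purely by independent coin flips. First I would observe that in $\hn_p(V)$ we have $\lfloor \log_{\frac{1}{p}} 1 \rfloor = 0$, so no node is ever promoted \emph{deterministically} past level $0$; consequently for every point $u$, independently of all other points and of all spatial locations, $\pr(\lev_p(u) \geq k) = p^k$ for every $k \geq 0$.

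Next I would invoke the standard fact that an independent $p^k$-thinning of a Poisson point process of intensity $\lambda$ on $A$ is again a Poisson point process, now of intensity $\lambda p^k$. The set $S_k$ of points promoted to level at least $k$ is exactly such a thinning, so $|S_k|$ is Poisson distributed with mean $\lambda p^k \ell(A)$. Since $\height(\hn_p(V)) \geq k$ if and only if $S_k \neq \emptyset$, we obtain
\[
\pr(\height(\hn_p(V)) \geq k) = \pr(|S_k| \geq 1) = 1 - e^{-\lambda p^k \ell(A)} \leq \lambda p^k \ell(A),
\]
where the last step uses $1 - e^{-x} \leq x$ for $x \geq 0$. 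An alternative route that avoids quoting the thinning theorem is to condition on $N = |V|$, which is Poisson with mean $\lambda \ell(A)$: given $N = n$ the points are i.i.d.\ uniform in $A$ and each is independently at level $\geq k$ with probability $p^k$, so a union bound gives $\pr(\height(\hn_p(V)) \geq k \mid N = n) \leq n p^k$, and averaging over $N$ yields $\EE[N] \cdot p^k = \lambda \ell(A) p^k$.

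There is no serious obstacle here; the only point requiring a moment's care is to justify that the level assignments are mutually independent and independent of the point locations, so that either the thinning property or the conditional union bound applies cleanly — this is immediate from the construction, in which the randomized promotions are independent coin tosses of bias $p$ performed per point. It is also worth remarking that the bound is stated for \emph{all} $k \geq 0$, consistent with the weight-$1$ specialization of Theorem~\ref{thm:properties:height}, where the threshold $\max_{u \in V} \log_{\frac{1}{p}} w(u)$ collapses to $0$.
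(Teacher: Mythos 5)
Your proof is correct. In fact you give two valid routes, and your ``alternative route'' --- conditioning on $N=|V|$, applying the union bound of Theorem~\ref{thm:properties:height} to get $\pr(\height \geq k \mid N=n)\leq n p^k$, and averaging over the Poisson variable $N$ --- is precisely the argument the paper sketches (it says only that the proof ``follows easily by conditioning on the number of points in $A$'' and omits the details). Your primary route via independent thinning is genuinely different: since every point lands in $S_k$ independently with probability $p^k$ and independently of location, $S_k$ is a Poisson process of intensity $\lambda p^k$ on $A$, and $\height \geq k$ iff $S_k\neq\emptyset$, giving the exact value $\pr(\height\geq k)=1-e^{-\lambda p^k \ell(A)}$ before the elementary bound $1-e^{-x}\leq x$. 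This buys a strictly sharper statement (an equality rather than a union bound, which matters when $\lambda \ell(A) p^k$ is not small), at the cost of invoking the thinning theorem; the conditioning route is more elementary and reuses the finite-set theorem directly. Your remark that the deterministic promotion threshold collapses to $0$ when all weights are $1$, so the bound holds for all $k\geq 0$, is a worthwhile observation that the paper leaves implicit.
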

The proof follows easily by conditioning on the number of points in
$A$ and using the argument from the proof of
Theorem~\ref{thm:properties:height} to bound the conditional
probability of the height being greater than $k$. We omit the
details. Theorem~\ref{thm:properties:height-poisson} can be extended
to general weights as well, but we would have to carefully define how
the weights are distributed by the Poisson point process. We omit that
case here since it is basically a mathematical digression.

For Poisson point processes in $\RR^2$ we found through simulation
that the number of hops varies logarithmically with the distance
between the pair of points being connected. We do not have an
anlytical proof for this fact as yet. 

\subsection{Bounding the stretch}
\label{sec:properties:stretch}

A major concern of topology control mechanisms is that the graph be a
spanner i.e. given a point set $V$ and a interconnection structure
$G$, if we denote the shortest distance between points $u,v \in V$
along the edges of $G$ by $d_G(u,v)$, the ratio
\[ \delta_G = \max_{u,v\in V}\frac{d_G(u,v)}{d(u,v)},\] known as the
{\em distance stretch} of $G$ should be low. For example
$\theta$-graphs have distance stretch $\frac{1}{1 -
  \sin(\theta/2)}$~\cite{ruppert-cccg:1991}. The {\em power stretch}
of $G$ is defined as the ratio of the power expended by communicating
through the links of $G$ to the power expended in communicating
directly. The power stretch of $G$ is known to be upper bounded by
$\delta_G^{\beta}$~\cite{li-ccn:2001}, where $\beta$ is a constant
between 2 and 5 that depends on the medium, hence we only consider
distance stretch here.

In order to study the spanner properties of hierarchical neighbor
graphs, we performed a simulation and computed the distance stretch
of
$\hn_p(V)$ constructed on a set of points randomly distributed on a
torus of unit area.
\begin{figure}[htbp]
	\begin{center}
		\includegraphics[scale=0.7 ]{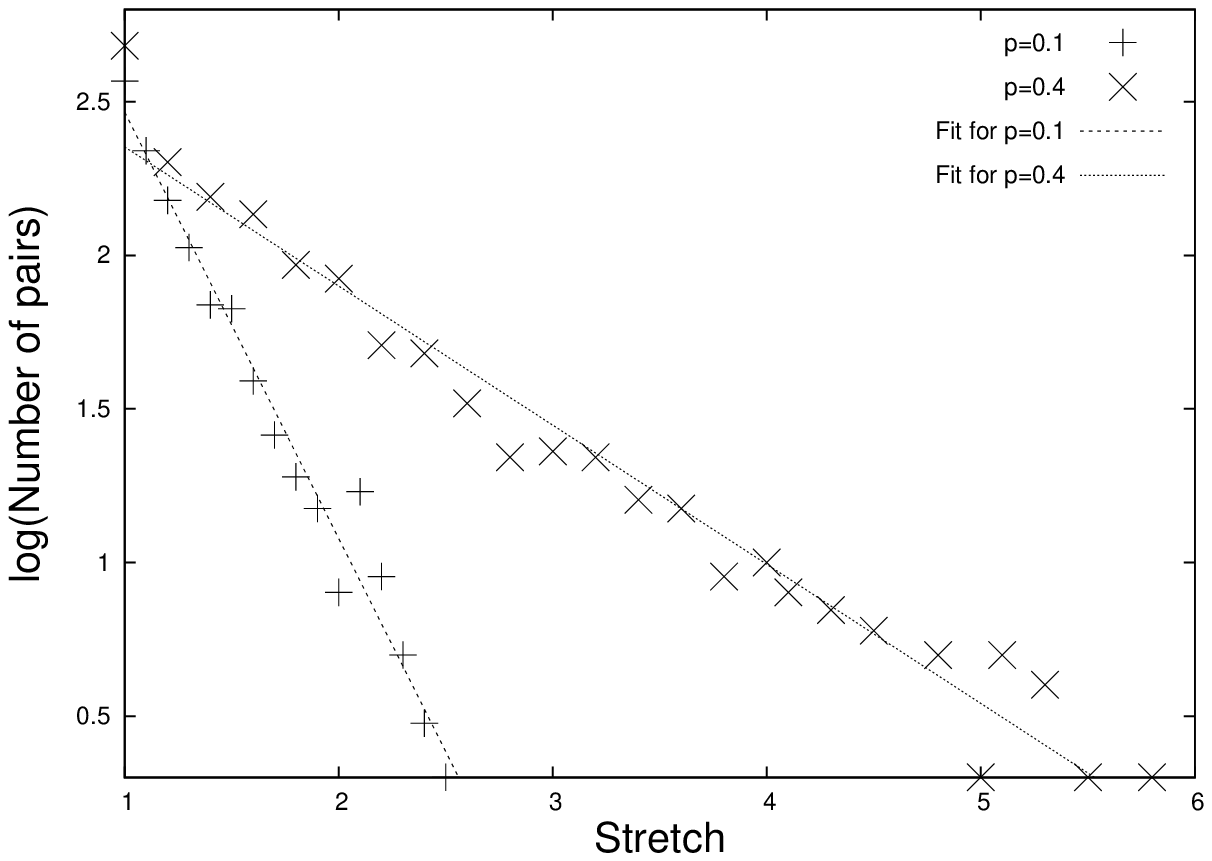}			
		\caption{Stretch graphs for pairs with $d(u,v)=0.1$. $\lambda=500$.}  						
  \label{fig:N-v-stretch}	
\end{center}
\end{figure}
In Figure~\ref{fig:N-v-stretch} we plot the graph distance of pairs of points of $V$ that have a Euclidean distance of 0.1. We found that the log of the number of pairs of vertices stretched to extent $s$ decreases linearly with $s$ which leads us to believe that the probability of a pair of vertices being stretched decays exponentially with the distance stretch value.  We also found that the number of pairs with lower stretch values dominates the number of pairs with higher stretch values as the distance between the pairs increases.  These observations lead us to the conjecture that the stretch for pairs of points in $\hn_p(V)$ follows a probability distribution that decays exponentially in the stretch value, and also decreases as the distance between the pairs increases, which effectively means that distant points undergo very little distortion in their distance when connected through $\hn_p(V)$, although nearby points may occasionally have to connect to each other through a long route.

Analytically, we were able to prove an initial theorem that confirms
our simulation results, but gives a weaker result.
\begin{theorem}
\label{thm:stretch}
Given parameters $p$ and $\lambda$ s.t. $0 < p < 1$ and $\lambda > 0$, 
the graph $\hn_p(V)$ built on a set of points $V$ generated by a
Poisson point process with intensity $\lambda$ has the property that
for any two points $u,v \in V$ such that $d(u,v) = l$, there are positive
constants $c_1$ and $c_2<1$ depending only on $p$ and $\lambda$ such
that for $0 < \theta < \frac{\pi}{3}$,
\[\pr\left(d_p(u,v) > \frac{l}{1 -
  2\sin\left(\frac{\theta}{2}\right)}\right) \\ \leq \exp \left\{
   - \frac{c_1}{l^4 \cdot (1 - c_2 \cdot
      \frac{\theta}{\pi})}\right\}.\]
\end{theorem}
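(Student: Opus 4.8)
The plan is to run a $\theta$-graph-style greedy walk on $\hn_p(V)$ and to quarantine the places where $\hn_p(V)$ fails to behave like a $\theta$-graph inside a single low-probability ``bad'' event. Fix $u,v$ with $d(u,v)=l$. Define a walk $u=x_0,x_1,\dots$ by letting $x_{k+1}$ be the neighbour of $x_k$ in $\hn_p(V)$ that lies in the cone of apex angle $\theta$ with apex $x_k$ about the direction from $x_k$ to $v$ and is closest to $x_k$, stopping when $v$ is reached; if no such neighbour exists the walk is said to \emph{get stuck}. The deterministic core is an elementary-geometry estimate in the spirit of Claim~\ref{clm:cones}, sharpened for angle $\theta<\frac{\pi}{3}$: if the walk never gets stuck then $d(x_{k+1},v)\le d(x_k,v)-\bigl(1-2\sin(\theta/2)\bigr)\,d(x_k,x_{k+1})$, so the walk terminates at $v$ and $\sum_k d(x_k,x_{k+1})\le l/\bigl(1-2\sin(\theta/2)\bigr)$. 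Consequently the event $\{d_p(u,v)>l/(1-2\sin(\theta/2))\}$ is contained in the event that the greedy walk gets stuck, and the whole problem reduces to bounding the probability of getting stuck.

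Next I would exhibit a ``good neighbourhood'' event $\mathcal N$ forcing the walk to reach $v$, and do the Poisson bookkeeping. A node $x_k$ can fail to have a usable in-cone neighbour only because (i) every point of $V$ in the relevant part of the cone towards $v$ sits strictly farther from $x_k$ than $\parent(x_k)$ and is at a level $\neq\lev(x_k)$, or (ii) that part of the cone contains no point of $V$ at all. Obstruction (i) is killed by asking that a suitable bounded neighbourhood $\Omega$ of the segment $uv$ — an ellipse-like region that contains every site the walk can visit, hence of area comparable to $l^2/(1-2\sin(\theta/2))^2$ — contains no node of level $\ge 1$, so that all points near the segment are at level $0$ and hence mutually adjacent out to a radius comparable to $\operatorname{diam}(\Omega)$; obstruction (ii) is killed by asking that each of the angular sub-sectors into which $\Omega$ is chopped at the appropriate scale is non-empty. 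Both are Poisson void-type events, and the promotion probabilities used in the proof of Theorem~\ref{thm:properties:height} handle the rare configurations in which $u$, $v$, or an intermediate $x_k$ has been promoted high and is therefore locally isolated; the dependence on $\theta$ enters through the monotone dependence of $|\Omega|$ (and of the per-sector size) on the cone width, which is what eventually produces the factor $1-c_2\theta/\pi$ with $c_2<1$.

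The hard part — and the reason the result is weaker than the conjectured exponential-in-$l$ decay — is pushing the stuck-probability down to the stated $\exp\{-c_1/(l^4(1-c_2\theta/\pi))\}$ rather than the mild $1-\exp(-\Theta(\lambda l^2))=\Theta(l^2)$ that the naive ``$\Omega$ has no high-level node'' event alone gives. My expectation is that the route must be certified through $\Theta(1/l^2)$ essentially-independent local windows and that within each window the event actually needed (a point of the correct level present in the correct sub-cone at the correct scale) has a success probability that itself degrades with the scale, so that a deliberately pessimistic union/product bound over all windows and scales — discarding the genuine dependencies — is what inflates the area-$\Theta(l^2)$, $\Theta(1/l^2)$-trial picture into one Poisson-void event over a region of effective area $\Theta(1/l^4)$, giving the exponent $-c_1/l^4$ with $c_1$ proportional to $\lambda$. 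Making this bookkeeping rigorous, in particular justifying the independence that is thrown away and verifying that the entire $\theta$-dependence collapses into the single factor $(1-c_2\theta/\pi)$, is the main obstacle; the geometric lemma and the Poisson estimates themselves are routine and already present, in essence, earlier in the paper.
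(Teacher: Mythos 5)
Your opening reduction coincides with the paper's: both invoke the Ruppert--Seidel cone geometry to conclude that a $\theta$-good path (each hop landing in $\cone_v(\cdot)$) has stretch at most $1/(1-2\sin(\theta/2))$, so that it suffices to lower-bound the probability that such a path exists. From there, however, your plan has a genuine gap that you yourself flag but do not close. Observe first that the theorem's bound $\exp\{-c_1/(l^4(1-c_2\theta/\pi))\}$ is, for large $l$, just $1-\Theta(1/l^4)$; the paper obtains it by proving that the success probability $\rec(l)$ is at least of order $1/l^4$ and then applying $1-x\le e^{-x}$. Your certificate --- that a region $\Omega$ of area $\Theta(l^2)$ around the segment $uv$ contain no node of level $\ge 1$ --- is a void event for the thinned Poisson process of intensity $\lambda p$, hence has probability $e^{-\Theta(\lambda p l^2)}$. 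This gives a success probability exponentially small in $l^2$ and a failure bound of $1-e^{-\Theta(l^2)}$, which is strictly weaker than $1-\Theta(1/l^4)$ once $l$ is large. You call the repair of this ``the main obstacle'' and sketch it only heuristically (the count of ``$\Theta(1/l^2)$ essentially-independent windows'' is not even well defined for large $l$), so the proposal does not establish the stated bound.

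You also do not identify the mechanism the paper actually uses, which is a recursion over the first hop rather than a global environmental certificate. Writing $\rec(i,j,l)$ for the success probability conditioned on $\lev(u)=i$ and $\lev(v)=j$, the paper integrates over the position and level of the first in-cone neighbour $w$ of $u$ to derive
\[ \rec(i,j,l) \;\ge\; \rec_{\direct}(i,j,l) \;+\; \bigl(1-\rec_{\direct}(i,j,l)\bigr)\cdot \Theta\!\left(\tfrac{\theta}{\pi}\right)\cdot\sum_{k\ge 0} p^k(1-p)\,\rec(k,j,l-\delta), \]
and then unfolds this recursion until the residual distance drops below a threshold $\gamma(\eta)$, at which point the direct-edge probability $\rec_{\direct}(l^*)=\Theta(1/l^{*4})$ takes over; the product of two $\Theta(1/l^2)$-type direct-edge factors is the source of the $1/l^4$, and the geometric series over the unbounded number of hops produces the $1/(1-c_2\theta/\pi)$ factor. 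Crucially, the events ``$u$ has a direct edge to $w$'' and ``$w$ has a $\theta$-good path to $v$'' are positively correlated, not independent, and the paper must prove (Lemma~\ref{lem:fkg-application}) that both are decreasing events in the promotion process so that the FKG inequality yields $\pr(A\cap B)\ge\pr(A)\,\pr(B)$. Your remark about ``justifying the independence that is thrown away'' is precisely the point at which FKG (or some substitute) is indispensable, and your sketch leaves it entirely open.
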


Before proceeding with the proof of Theorem~\ref{thm:stretch}, we need
some preliminaries. Given a point set $V$, our algorithm for
constructing $\hn_p(V)$ defines a probability space $(\Omega, \CF,
\pr)$.  A realization of this process i.e. an element of $\Omega$, is
of the form $\omega = (V=S_0, S_1, \ldots)$. Let us denote the $i$th
set in this tuple as $\omega(i)$.  We define a partial order on the
set of all realizations as follows, $\omega \preceq \omega'$ if
$\omega(i) \subseteq \omega'(i)$ for all $i \geq 0$. An random
variable $N$ defined on this probability space is called an {\em
  increasing} random variable if $N(\omega) \leq N(\omega')$ whenever
$\omega \preceq \omega'$. Similarly a random variable $N$ is called
decreasing if $-N$ is increasing. An event is called increasing if its
indicator is an increasing random variable. For our setting this
definition allows us to state a specific version of the more general
FKG inequality~\cite{fortuin-cmp:1971}.
\begin{lemma}
\label{lem:fkg}
For any two events $A$ and $B$ that are either both increasing or both
decreasing, 
\[ \pr(A \cap B) \geq \pr(A)\cdot\pr(B),\]
where $\pr(\cdot)$ is the probability measure defined on $\Omega$.
\end{lemma}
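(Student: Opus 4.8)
The plan is to recognise that, after re-coordinatising realisations by the levels they assign to the points of $V$, the probability space $(\Omega,\CF,\pr)$ is a product of totally ordered probability spaces and the order $\preceq$ is exactly the coordinatewise order; Lemma~\ref{lem:fkg} is then the Harris (product-measure) special case of the FKG inequality. First I would set up the coordinatisation. With $V$ fixed, the only randomness in building $\hn_p(V)$ is the promotion process, and since each as-yet-undecided point of $S_{i-1}$ is promoted into $S_i$ with probability $p$ independently of every other point, the random variables $\lev_p(u)$, $u\in V$, are mutually independent, each geometric with $\pr(\lev_p(u)\ge k)=p^k$ (in the weighted variant $\hn^w_p(V)$ the law is this geometric shifted up by the deterministic amount $\lfloor\log_{1/p}w(u)\rfloor$, which affects nothing below). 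Writing $\ell_\omega(u)$ for the value of $\lev_p(u)$ in a realisation $\omega$, the map $\omega\mapsto(\ell_\omega(u))_{u\in V}$ identifies $(\Omega,\CF,\pr)$ with the product space $\big(\prod_{u\in V}\N_0,\ \prod_{u\in V}\mu_u\big)$, each factor $\N_0=\{0,1,2,\dots\}$ carrying its usual total order, and the $i$-th set of a realisation recovered as $\omega(i)=\{u\in V:\ell_\omega(u)\ge i\}$.

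The core is then two short standard steps. (i) The orders match: $\omega\preceq\omega'$ means $\omega(i)\subseteq\omega'(i)$ for all $i$, equivalently $\ell_\omega(u)\le\ell_{\omega'}(u)$ for every $u$ (for the forward direction put $i=\ell_\omega(u)$, so $u\in\omega(i)\subseteq\omega'(i)$ and hence $\ell_{\omega'}(u)\ge i$; the converse is immediate). Hence $\preceq$ is precisely the coordinatewise order on $\prod_u\N_0$, so ``increasing event'' in the sense of the paragraph preceding Lemma~\ref{lem:fkg} coincides with ``upward closed for the coordinatewise order'' (and similarly for decreasing). (ii) Harris/FKG for product measures: for a product of totally ordered probability spaces, any two upward-closed events $A,B$ satisfy $\pr(A\cap B)\ge\pr(A)\pr(B)$; the two-decreasing case follows by applying this to $A^{c},B^{c}$ (which are upward closed) and expanding $\pr(A^{c}\cap B^{c})=1-\pr(A)-\pr(B)+\pr(A\cap B)$. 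For self-containedness I would include the textbook proof of (ii): on a single totally ordered factor it is the correlation inequality $\iint(\mathbf 1_A(x)-\mathbf 1_A(y))(\mathbf 1_B(x)-\mathbf 1_B(y))\,d\mu(x)\,d\mu(y)\ge 0$, whose integrand is nonnegative because $\mathbf 1_A,\mathbf 1_B$ are nondecreasing; the induction on the number of factors conditions on the last coordinate $t$, applies the inductive hypothesis to the remaining coordinates to obtain $\pr(A\cap B\mid t)\ge\pr(A\mid t)\pr(B\mid t)$, observes that $t\mapsto\pr(A\mid t)$ and $t\mapsto\pr(B\mid t)$ are nondecreasing, and finishes with the one-dimensional case.

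I do not expect a genuine obstacle: this is essentially a bookkeeping reduction to a classical inequality, and one could equally invoke the general FKG theorem (for product measures the lattice condition $\mu(x\wedge y)\mu(x\vee y)\ge\mu(x)\mu(y)$ holds with equality). The points that deserve a little care are step (i), namely checking that the partial order on set-sequences really becomes the product order once one passes to levels; the fact that for a Poisson point process $V$ is countably infinite, which is handled by the standard approximation (an upward-closed event is an increasing limit of upward-closed cylinder events depending on finitely many points, and the inequality passes to the limit by monotone convergence, or one simply notes that in every application the relevant events depend on finitely many points); and recording that the deterministic promotions do not disturb the product structure, as they only reshape each individual factor measure $\mu_u$.
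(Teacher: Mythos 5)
Your proof is correct. It is worth noting that the paper does not actually prove Lemma~\ref{lem:fkg} at all: it simply states it as ``a specific version of the more general FKG inequality'' and cites Fortuin--Kasteleyn--Ginibre, leaving implicit the verification that the probability space of promotion sequences satisfies the hypotheses of that theorem. Your proposal supplies exactly the missing bookkeeping and then some. The identification $\omega\mapsto(\lev_p(u))_{u\in V}$ of $\Omega$ with a product of independent (shifted) geometric factors, together with your check in step (i) that the paper's order $\omega\preceq\omega'\iff\omega(i)\subseteq\omega'(i)\ \forall i$ is precisely the coordinatewise order on levels, is the substantive content that the paper's citation glosses over; once that is in place the statement is Harris's inequality for product measures, and your inclusion of the standard one-factor correlation inequality plus induction-by-conditioning makes the argument self-contained where the paper relies on an external black box. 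Your handling of the two side issues --- the reduction of the decreasing case to the increasing case by complementation, and the passage to countably many factors (relevant for the Poisson setting) via monotone approximation by cylinder events --- is also correct and is not addressed in the paper. In short, you prove more than the paper does; the only trade-off is length, since the paper buys brevity by outsourcing the entire argument to the classical reference.
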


Now we turn to the main proof.

\begin{proof}
In order to prove this theorem we draw on the paper on $\theta$-graphs
by Ruppert and Seidel~\cite{ruppert-cccg:1991} in which they extended
earlier work by Keil~\cite{keil-swat:1988} and showed that paths that
stayed within a cone of angle $\theta$ centered around the line joining
a point to its destination have stretch at most
$1/(1-2\sin(\theta/2))$. The proof proceeds by calculating a lower
bound on the probability that a there is a path between any two points
that stays within a cone of angular width $\theta$.

Given two points $u$ and $v$, we define the {\em $\theta$-cone of $u$
  w.r.t $v$} as a sector of a circle of radius $d(u,v)$ centered at
$u$ with an angle $\theta$ which is bisected by the line segment
joining $u$ and $v$. We denote this cone $\cone_v(u)$.  We define a
{\em $\theta$-good path} between vertices $u$ and $v$ as a path
$u=u_0\rightarrow u_1 \ldots \rightarrow u_n=v$ such that $u_{i+1}$ is
in $\cone_v(u_i)$. 

We also define events that will be relevant to us
\begin{itemize}
\item $G_{uv}^{\theta}$: There is a $\theta$-good path between vertices $u$ and $v$.

\item $G_{uv}^{\theta}(i,j)$: There is a $\theta$-good
  path between vertices $u$ and $v$ conditioned on the fact that
  $\lev(u) = i$ and $\lev(v) = j$.

\item $G_{u\direct v}$: The event that there is a direct edge between vertices $u$ and $v$.

\item $G_{u\direct v}(i,j)$: The event that there is a direct edge
  between vertices $u$ and $v$ conditioned on the fact that 
  $\lev(u) = i$ and $\lev(v) = j$. 
\end{itemize}

Our proof proceeds by computing the probability that the two points
are either directly connected or $u$ is connected to a point $w$ in
$\cone_v(u)$ which then has a $\theta$-good path joining it to
$v$. Note that the second case is an intersection event. Hence in
order to proceed we will need the following lemma which shows that the
two events being intersected are positively correlated. 

\begin{lemma}
\label{lem:fkg-application}
Given a set of points $X$ generated by a Poisson point process in
$\RR^2$, consider three points $u,v,w \in X$ such that $w \in \cone_v(u)$. Then
\[ \pr \left(\bigcup_{k=0}^{\infty} \{G_{u \direct
  w}(i,k) \cap G^\theta_{wv}(k,j)\}\right) \geq \sum_{k=0}^\infty 
\pr \left( G_{u \direct w}(i,k)\right) \cdot \pr\left( G^\theta_{wv}(k,j)\right).\]
\end{lemma}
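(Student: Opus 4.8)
The plan is to show that the event $G_{u\direct w}(i,k)$ is a decreasing event and the event $G^\theta_{wv}(k,j)$ is an increasing event on the probability space $(\Omega,\CF,\pr)$ — wait, that cannot be right, since Lemma~\ref{lem:fkg} as stated applies only to pairs that are \emph{both} increasing or \emph{both} decreasing. So the real plan is to argue that \emph{both} events can be taken to be decreasing (or both increasing) in a suitable sense, and then apply Lemma~\ref{lem:fkg} termwise in $k$, finally summing over $k$. Let me lay this out carefully.

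First I would fix the conditioning $\lev(u)=i$, $\lev(v)=j$, $\lev(w)=k$. Conditioned on these level assignments, the remaining randomness is the promotion process for the points of $V \setminus \{u,v,w\}$, which still generates an element of a (reduced) probability space on which the partial order $\preceq$ makes sense. The key structural observation is: adding more points to higher levels can only \emph{destroy} a direct edge $u \direct w$, because a direct edge exists precisely when no point promoted past $\min\{i,k\}$ sits in the relevant disk(s); promoting more points makes this harder. Hence $G_{u\direct w}(i,k)$ is a \emph{decreasing} event. For the $\theta$-good path event $G^\theta_{wv}(k,j)$, I claim it is \emph{also} decreasing in the same sense — this is the delicate point and I would argue it as follows: a $\theta$-good path from $w$ to $v$ is built greedily (each hop goes to a neighbor inside the current $\theta$-cone), and promoting more points upward tends to shrink the circles of connection and thus the set of available forward hops; one must check that the existence of \emph{some} $\theta$-good path is monotone under this, which requires being a little careful about whether a newly promoted point could itself serve as a useful forward hop. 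The cleanest fix is to define the events so that we only use connections between points at levels $\le \max\{i,j,k\}$, i.e., treat points promoted higher as obstacles only, never as path vertices; with that convention both events are manifestly decreasing.

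With both $G_{u\direct w}(i,k)$ and $G^\theta_{wv}(k,j)$ decreasing events on the conditional probability space, Lemma~\ref{lem:fkg} (the FKG inequality) gives
\[
\pr\left(G_{u\direct w}(i,k) \cap G^\theta_{wv}(k,j)\right) \;\geq\; \pr\left(G_{u\direct w}(i,k)\right)\cdot\pr\left(G^\theta_{wv}(k,j)\right)
\]
for each fixed $k$. Next I would observe that the events $\{G_{u\direct w}(i,k) \cap G^\theta_{wv}(k,j)\}_{k\ge 0}$ are disjoint (they fix distinct values $\lev(w)=k$), so the probability of their union is the sum of their probabilities. Combining,
\[
\pr\left(\bigcup_{k=0}^\infty \{G_{u\direct w}(i,k)\cap G^\theta_{wv}(k,j)\}\right) = \sum_{k=0}^\infty \pr\left(G_{u\direct w}(i,k)\cap G^\theta_{wv}(k,j)\right) \geq \sum_{k=0}^\infty \pr\left(G_{u\direct w}(i,k)\right)\cdot\pr\left(G^\theta_{wv}(k,j)\right),
\]
which is exactly the claimed inequality.

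The main obstacle, as flagged above, is verifying the monotonicity (decreasing-ness) of the $\theta$-good path event $G^\theta_{wv}(k,j)$ under the partial order $\preceq$ on realizations — it is \emph{not} a priori obvious, since promoting a point can both remove edges and potentially create a useful high-level relay. I expect the resolution to hinge on the convention that a $\theta$-good path is only permitted to route through vertices of level at most $\max\{i,j,k\}$ (which loses nothing for a lower bound, since it only restricts the set of admissible paths), after which every edge used is of the form "connect if nothing higher intervenes," and adding points higher up only kills such edges. A secondary, more routine, point is confirming that conditioning on $\lev(u)=i$, $\lev(v)=j$, $\lev(w)=k$ preserves the product structure of the promotion process enough that the FKG inequality of Lemma~\ref{lem:fkg} still applies on the conditional space — this follows because conditioning on the levels of three fixed points only reweights the independent promotion coins of those three points and leaves all others untouched, so the conditional law is still a product of independent increasing coordinates.
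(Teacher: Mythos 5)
Your overall architecture (fix the levels of $u,v,w$, identify decreasing events, apply Lemma~\ref{lem:fkg} termwise in $k$, then sum using disjointness over $k$) matches the paper's, but there is a genuine gap at exactly the step you flag as delicate: the claim that $G^\theta_{wv}(k,j)$ is a decreasing event. Your proposed fix --- restricting $\theta$-good paths to vertices of level at most $\max\{i,j,k\}$ and treating higher points as obstacles only --- does not repair it, because the non-monotonicity does not come from high-level relays. It comes from the fact that whether an edge $x_{m-1}\to x_m$ of the path exists depends on the levels of $x_{m-1}$ and $x_m$ themselves: if you pass to a smaller outcome $\alpha\preceq\alpha'$ and an intermediate vertex $x_m$ is demoted, the edge into it (for which $x_m$ may have been the parent of $x_{m-1}$) can disappear, so a path present in $\alpha'$ need not survive in $\alpha$. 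The event ``some $\theta$-good path exists'' is a union over level assignments of the intermediate vertices, and the event $\{\lev(x_m)=i_m\}$ is neither increasing nor decreasing, so decreasingness of the union cannot be deduced from decreasingness of the conditioned pieces.

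The paper's resolution, which your proposal is missing, is to push the conditioning all the way down: it decomposes $G^\theta_{wv}(k,j)$ as a union over vertex sequences $\x$ and level vectors $\i$ of events $A(\omega)_{\x}(\i)$ in which the levels of \emph{every} vertex on the path are fixed. Each such piece is a finite intersection of conditioned direct-edge events and hence genuinely decreasing in the residual level randomness (Propositions~\ref{prp:direct} and~\ref{prp:paths}); FKG is applied to each pair $\bigl(G(\omega)_{u\direct w}(i,k),\,A(\omega)_{\x}(\i)\bigr)$ separately and the pieces are then reassembled. A second, smaller omission: Lemma~\ref{lem:fkg} is stated for the level-assignment randomness on a \emph{fixed} point set, whereas here the positions are also random; the paper first conditions on the spatial realization $\omega\in\Omega_1$ (via the box construction of Meester--Roy), applies FKG within each $\omega$, and only then averages over $\omega$. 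Your proof should do the same, since applying the stated FKG lemma directly on the joint position-and-level space is not licensed by anything in the paper.
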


\begin{proof} We proceed by showing
  that for a given realization of the Poisson point process the events
  we are considering are both decreasing. Hence we can apply the FKG
  inequality to these events conditioned on a particular realization, 
  then decondition to get the result. Let us formalize this.

  Following~\cite{meester:1996} we view the Poisson point process $C$ in
  the following way: Consider the family of boxes
  \[ K(n,z=(z_1,z_2)) = \left(\frac{z_1}{2^n}, \frac{z_1+1}{2^n}\right] \times
  \left(\frac{z_2}{2^n}, \frac{z_2+1}{2^n}\right], \mbox{ for all } n \in \N, z
  \in \ZZ^2,\]
For a given realization of $X$, each point $x \in X$ is contained in a
unique box $K(n,z(n,x))$ for each $n \in \N$. With probability 1 there
is a unique smallest number $n(x)$ such that $K(n(x), z(n(x),x))$
contains no other points of $X$. Hence we can view the realizations of
$X$ in terms of sets of boxes taken from
\[\CK(\ZZ^2) = \bigcup_{n \in \N} \bigcup_{z \in \ZZ^2} K(n,z),\]
such that each box contains exactly one point i.e. we have a
measurable mapping from the probability space of the Poisson point
process to a space $\Omega_1 \subseteq \CK(\ZZ^2)$ and an outcome
$\omega \in \Omega_1$ is a set of boxes containing exactly one point
each.  

In order to describe the probability space on which $\hn_p(X)$ is
defined, we observe that each realization of $\hn_p(X)$ can be
completely described by the levels of the points of a particular
realization of $X$. Defining $\Omega_2 = \N \cup \{0\}$ we get that
$\hn_p(X)$ is defined on a product space $\Omega = \Omega_1 \times
\Omega_2$.

Now, let us consider a given $\omega \in \Omega_1$ and two points $u,v
\in X$ for this outcome. Let us denote by $G(\omega)_{u \direct
  v}(i,j)$ the event that for the realizations mapped to $\omega$, $u$
and $v$ are connected by a directed edge, conditioned on the fact that
$\lev(u) = i$ and $\lev(v) = j$. We claim that this is a decreasing
event. To see this note that the event $G(\omega)_{u \direct v}(i,j)$
depends only on the levels of all the points of $\omega$ apart from
$u$ and $v$. Consider two outcomes $\alpha$ and $\alpha'$ contained in
the subspace of $\Omega$ defined by $\omega_1$ and
$\{\lev(u)=i,\lev(v)=j\}$. Assume that $\alpha \preceq \alpha'$ as
defined at the beginning of this section.

Let us assume that $i \leq j$. Clearly if $G(\omega)_{u \direct
  v}(i,j)$ does not occur in $\alpha$, there is some point $w \in
\omega_1$ such that $\lev(w)(\alpha)$ i.e. the value of $\lev(w)$ in
outcome $\alpha$, is strictly greater than $i$ and $d(u,w) \leq
d(u,v)$. Since $\alpha \preceq \alpha'$, this point $w$ exists in
$\alpha'$ and $\lev(w)(\alpha') \geq \lev(w)(\alpha) > i$ and hence
there cannot be a direct edge from $u$ to $v$ in $\alpha'$ either. If
$i = j$, we can find two such points that prevent a direct edge being
formed between $u$ and $v$ and exactly the same argument holds. We
state this as a proposition.
\begin{proposition}
\label{prp:direct}
Given an $\omega \in \Omega_1$ and two points $u, v$ of the Poisson
point process, $G(\omega)_{u \direct v}(i,j)$ is a decreasing event for
all integers $i,j \geq 0$.
\end{proposition}
Again starting with an $\omega \in \Omega_1$ and a sequence of points
$\x = x_1, \ldots, x_n, n\geq 2$ of a realization of the Poisson
point process mapped to $\omega$, we define the event
\[ A(\omega)_{\x} :  G_{x_1 \direct x_2} \cap \cdots \cap G_{x_{n-1} \direct x_n},\]
and given another sequence $\i = i_1, \ldots, i_n, n \geq
2$ of non-negative integers, we define
\[ A(\omega)_{\x}(\i) = \{ A(\omega)_{\x} \mid \lev(x_j) = i_j, 1 \leq j \leq n\}.\]
In other words, the event $A(\omega)_{\x}(\i)$ is the event that the points
of $\x$ form a path in $\hn_p(X)$ conditioned on their levels being
fixed. We claim that events of the form $A(w)_{\x}(\i)$ are decreasing
events.
\begin{proposition}
\label{prp:paths}
Given an $\omega \in \Omega_1$ and a sequence of points
$\x = x_1, \ldots, x_n, n\geq 2$ of the point process, $A(\omega)_{\x}(\i)$
is a decreasing event for every sequence $\i = i_1, \ldots, i_n, n \geq
2$ of non-negative integers.
\end{proposition}
This is not difficult to see since the event $A(w)_{\x}(\i)$ is a
finite intersection of the kind of events we argued were decreasing in
Proposition~\ref{prp:direct}.

Before proceeding we observe that a bounded area around $\cone_v(u)$
is all that we are considering since points outside a certain region
do not affect the events we are discussing. In the following when we
talk about an outcome $\omega \in \Omega_1$, we will in fact only be
referring to that outcome limited within this bounded region. 

Now we turn to the left hand side of the inequality in the statement
of the lemma and note that it can be rewritten as
\begin{equation}
\label{eq:fkg-lhs}
\sum_{\omega \in
  \Omega_1} \pr \left(\bigcup_{k=0}^{\infty} \{G(\omega)_{u \direct
  w}(i,k) \cap G(\omega)^\theta_{wv}(k,j)\}\right)\cdot \pr(\omega),
\end{equation}
where $\pr(\omega)$ is the probability of a particular outcome. Note
that since we have restricted our outcomes to a finite region
containing $\cone_v(u)$, the probability $\pr(\omega)$ is non-zero.

We observe that the event $G(\omega)^\theta_{wv}(k,j)$ can be written
as a union of events of the form $A(\omega)_{\x}(\i)$ where $x_1 = w$
and $x_n = v$ as long as all the points of $\x$ have the property of
$\theta$-good paths. To write this formally we define $\sigma^n(\omega)(w,v)$ to be all the
sequences $\x$ of points that have these properties i.e. $x_1 = w, x_n
= v, x_i \in \cone_v(x_{i-1}), 2 \leq i \leq n$. Also we use the
notation $\N_0^i = \prod_{j=1}^i \N\cup\{0\}$. Hence we get
\[ G(\omega)^{\theta}_{wv}(k,j) = \bigcup_{n=2}^{\infty}\ \ \ \bigcup_{\i
  \in \N_0^{n-2}} \ \ \bigcup_{\x \in \sigma^n(\omega)(w,v)}
A(\omega)_{\x}(\i). \]
Putting this into~(\ref{eq:fkg-lhs}) we get that 
\begin{eqnarray*}
\mbox{LHS}&  = & \sum_{\omega \in
  \Omega_1} \pr \left(\bigcup_{k=0}^{\infty}\  \bigcup_{n=2}^{\infty}\ \ \ \bigcup_{\i
  \in \N_0^{n-2}} \ \ \bigcup_{\x \in \sigma^n(\omega)(w,v)} \{G(\omega)_{u \direct
  w}(i,k) \cap A(\omega)_{\x}(\i) \}\right)\cdot \pr(\omega)\\
& = & \sum_{\omega \in
  \Omega_1} \sum_{k=0}^{\infty}  \sum_{n=2}^{\infty}\ \ \sum_{\i
  \in \N_0^{n-2}}\ \  \sum_{\x \in \sigma^n(\omega)(w,v)} \pr \left( \{G(\omega)_{u \direct
  w}(i,k) \cap A(\omega)_{\x}(\i) \}\right)\\ 
& & \cdot \pr(\omega)\cdot
\pr(\lev(w) = k) \cdot \pr(\lev(x_j) = i_j, 2\leq j
\leq n-1).
\end{eqnarray*}
From Propositions~\ref{prp:direct} and~\ref{prp:paths} and the FKG
inequality (Lemma~\ref{lem:fkg}) we get that
\begin{eqnarray*} 
\mbox{LHS} & \geq & \sum_{\omega \in
  \Omega_1} \sum_{k=0}^{\infty}  \sum_{n=2}^{\infty}\ \ \sum_{\i
  \in \N_0^{n-2}}\ \  \sum_{\x \in \sigma^n(\omega)(w,v)} \pr \left(G(\omega)_{u \direct
  w}(i,k)\right)\cdot\pr\left( A(\omega)_{\x}(\i) \}\right)\\ 
& & \cdot \pr(\omega)\cdot
\pr(\lev(w) = k) \cdot \pr(\lev(x_j) = i_j, 2\leq j
\leq n-1)\\
& \geq & \sum_{\omega \in
  \Omega_1} \sum_{k=0}^{\infty} \pr \left(G(\omega)_{u \direct
  w}(i,k)\right)\cdot\pr(\lev(w) = k) \cdot \pr(\omega) \\ 
& & \cdot \sum_{\omega \in
  \Omega_1} \sum_{n=2}^{\infty}\ \ \sum_{\i
  \in \N_0^{n-2}}\ \  \sum_{\x \in \sigma^n(\omega)(w,v)} \pr\left(
A(\omega)_{\x}(\i) \}\right) \cdot \pr(\omega)\cdot 
 \pr(\lev(x_j) = i_j, 2\leq j
\leq n-1)\\
& = & \mbox{RHS}
\end{eqnarray*}
\end{proof}

Since the underlying point set $V$ is produced by a stationary
process, the probability of having a particular path between a pair of
points a certain distance apart is the same as the probability of
having the same kind of path for another pair which is the same
distance apart. Hence we define the following functions which we will need
\begin{itemize}
\item $\rec(l) = \pr(G_{uv}^{\theta} \mid d(u,v)=l)$. 

\item $\rec(i,j,l) = \pr(G_{uv}^{\theta} \mid , d(u,v)=l), lev_p(u)=i, lev_p(v)=j)$.

\item $\rec_{\direct}(l) = \pr(G_{u\direct v} \mid d(u,v)=l)$. 

\item $\rec_{\direct}(i,j,l) = \pr(G_{u\direct v} \mid lev_p(u)=i, lev_p(v)=j, d(u,v)=l)$.

\end{itemize}

We start by looking at $\rec(i,j,l)$ for some $i,j >0$. We divide this
probability into two cases. The first being that there is a direct
edge between $u$ and $v$, and the second that there is a direct edge
between $u$ and a point $w$ of level $k$ which further has a $\theta$-good
path to $v$. We sum this probability over all possible values of
$k$. Note that when $i\leq k$, $u$ connects to $w$ when looking for a
point of level $i+1$, and when $i\geq k$, $w$ connects to $u$ when
looking for a point of level $k+1$. In case $w$ gives an incoming edge
to $u$, we sum over the probability that the closest neighbor of $u$
belonging to level $k$ has a $\theta$-good path to $v$. We get that
\begin{eqnarray*} 
\pr(G_{uv}^{\theta}(i,j)) & = & \pr(G_{u\direct v}(i,j)) + (1-\pr(G_{u \direct
  v}(i,j)))\cdot \pr\left(\bigcup_{w \in \cone_v(u)}  G_{u\direct w} \cap
G_{wv}^{\theta}\right)\\  
& = & \pr(G_{u\direct v}(i,j)) + (1-\pr(G_{u \direct v}))\cdot
\pr\left(\bigcup_{w \in \cone_v(u)}\bigcup_{k \geq 0} G_{uw}(i,k) \cap G_{wv}^{\theta}\right) \\  
& = & \pr(G_{u\direct v}) + (1-\pr(G_{u \direct v}))\\ 
& &  \cdot
\int_0^{d(u,v)} \pr\left(\bigcup_{w \in \cone_v(u)}\bigcup_{k \geq 0} \{G_{uw}(i,k)
\cap G_{wv}^{\theta} \mid x \leq d(u,w) \leq x+dx\}\right)\\
&&  \cdot\ \pr(\exists w
\in \cone_v(u), x \leq d(u,w) \leq x+dx).
\end{eqnarray*}
Now since we are considering infinitesimal widths, the probability
that there is more than one point at a distance between $x$ and $x+dx$
from $u$ in $\cone_v(u)$ is $o(dx)$ so we can neglect it. Hence we get 
\begin{eqnarray*}
\pr(G_{uv}^{\theta}(i,j)) & = &\int_0^{d(u,v)} \pr\left(\bigcup_{k \geq 0} \{G_{uw}(i,k)
\cap G_{wv}^{\theta} \mid x \leq d(u,w) \leq x+dx\}\right)  \cdot
\lambda \theta x dx.
\end{eqnarray*} 
Now, using the result of Lemma~\ref{lem:fkg-application} we get 
\begin{eqnarray*}
\pr(G_{uv}^{\theta}(i,j)) & \geq & \sum_{k=0}^\infty \int_0^{d(u,v)}
\pr \left( G_{u \direct w}(i,k)\mid x \leq d(u,w) \leq x+dx\right)\\
& & \cdot\ \pr\left( G^\theta_{wv}(k,j) \mid x \leq d(u,w) \leq x+dx \right) \cdot
\lambda \theta x dx.
\end{eqnarray*}

For $k \leq i$ we have
\[\pr(G_{uv}^{\theta}(i,j)) \geq
\sum_{k=0}^\infty \int_0^{d(u,v)}e^{-\lambda \pi x^2 p^k} \cdot e^{-\lambda \pi x^2 p^{k+1}} \lambda
\theta x dx.\]
and for $k > i$
\[\pr(G_{uv}^{\theta}(i,j)) \geq
\sum_{k=0}^\infty \int_0^{d(u,v)}e^{-\lambda \pi x^2 p^k} \cdot e^{-\lambda \pi x^2 p^{i+1}} \lambda
\theta x dx.\] 

Let us assume that $d(u,v) = l$ 
In the previous equation if $d(w,v)= \delta$ we get
\begin{eqnarray*} 
\rec(i,j,l) & \geq  &\rec_{\direct}(i,j, l)\\ 
 & & + (1-\rec_{\direct}(i,j,
l)) \cdot \sum_{k=0}^{i}\left(
  \int_0^l e^{-\lambda \pi x^2 p^k} \cdot e^{-\lambda \pi x^2
    p^{k+1}} \lambda \theta x dx \right) \cdot p^k(1-p) \cdot
\rec(k,j,l - \delta) \\
 & & +  (1-\rec_{\direct}(i,j, l)) \cdot \sum_{k=i+1}^\infty \left( \int_0^l
e^{-\lambda \pi x^2 p^k} \cdot e^{-\lambda \pi x^2 p^{i+1}} \lambda \theta x dx\right) \cdot p^k(1-p) \cdot
\rec(k,j,l - \delta).
\end{eqnarray*}
For $n > m$
\[ e^{-\lambda \pi x^2 p^n} > e^{-\lambda \pi x^2 p^m}.\]
Putting these values in we get 
\begin{eqnarray*} \rec(i,j,l) & \geq & \rec_{\direct}(i,j,l) + (1-\rec_{\direct}(i,j,l)) \cdot \left\{\sum_{k=0}^{i}\left(
  \int_0^l e^{-2\lambda \pi x^2 p^{k}} \lambda \theta x dx \right) \cdot p^k(1-p) \cdot
\rec(k,j,l - \delta) \right.\\
 & & \left. +  \sum_{k=i+1}^\infty \left( \int_0^l
 e^{-2 \lambda \pi x^2 p^{i+1}} \lambda \theta x dx\right) \cdot p^k(1-p) \cdot
\rec(k,j,l - \delta)\right\} \\
& \geq  & \rec_{\direct}(i,j,l) +
   (1-\rec_{\direct}(i,j,l)) \cdot \left\{\frac{\theta}{4 \pi} \cdot\sum_{k=0}^{i}\left(\frac{1 - e^{-2 \pi l^2 p^{k}}}{p^{k}}
 \right) \cdot p^k(1-p) \cdot
\rec(k,j,l - \delta) \right. \\
 & & \left. + \frac{\theta}{4 \pi}\cdot  \left( \frac{1 - e^{-2 \pi l^2 p^{i+1}}}{p^{i+1}}
\right) \cdot \sum_{k=i+1}^\infty p^k(1-p) \cdot
\rec(k,j,l - \delta)\right\}.
\end{eqnarray*}
We use the simple fact that for $k \leq i$
\[  (1 -
  e^{-2\lambda \pi l^2 p^k})  \geq (1 -
  e^{-2\lambda \pi l^2 p^i}),\]
to get
\begin{eqnarray*} \rec(i,j,l) & \geq  &\rec_{\direct}(i,j,l) +
(1 - \rec_{\direct}(i,j,l)) \cdot \left\{
   \frac{\theta}{4 \pi} \cdot  (1- e^{-2\lambda \pi l^2 p^i}) \cdot
   \sum_{k=0}^{i}\left(\frac{1}{p^{k}} 
 \right) \cdot p^k(1-p) \cdot
\rec(k,j,l - \delta) \right.\\
 & & \left. + \frac{\theta}{4 \pi}\cdot  \left( \frac{1 - e^{-2 \lambda \pi l^2 p^{i+1}}}{p^{i+1}}
\right) \cdot \sum_{k=i+1}^\infty p^k(1-p) \cdot
\rec(k,j,l - \delta)\right\}.
\end{eqnarray*}

In order to simplify this we prove a small lemma:
\begin{lemma}
\label{lem:exp-sums}
Given a parameter $p$ s.t. $0 < p < 1$ and two sequences, $\{a_n =
\frac{1}{p^n} : n \geq 0\}$ and $\{b_n : n \geq 0\}$ which takes
non-negative values:
\[ \sum_{n=0}^i a_n \cdot b_n \geq \frac{p^i(1-p)}{1 - p^{i+1}} \cdot
\left(\sum_{n=0}^i a_n\right) \cdot \left( \sum_{m=0}^i b_m\right). \]
\end{lemma}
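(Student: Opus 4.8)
The plan is to recognize that the factor $\frac{p^i(1-p)}{1-p^{i+1}}$ appearing on the right-hand side is exactly the reciprocal of the partial sum $A_i := \sum_{n=0}^i a_n$, after which the inequality collapses to a one-line term-by-term comparison. First I would evaluate the finite geometric series: setting $q = 1/p > 1$,
\[ A_i = \sum_{n=0}^i p^{-n} = \frac{q^{i+1}-1}{q-1} = \frac{1 - p^{i+1}}{p^i(1-p)}, \]
so that $\frac{p^i(1-p)}{1-p^{i+1}} = 1/A_i$. Plugging this in, the right-hand side of the asserted inequality becomes $\frac{1}{A_i}\cdot A_i \cdot \sum_{m=0}^i b_m = \sum_{m=0}^i b_m$.

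Next I would reduce the lemma to proving $\sum_{n=0}^i a_n b_n \ge \sum_{n=0}^i b_n$, which rearranges to $\sum_{n=0}^i (a_n - 1)\,b_n \ge 0$. Since $0 < p < 1$ forces $a_n = p^{-n} \ge p^0 = 1$ for every $n \ge 0$, and each $b_n \ge 0$ by hypothesis, every summand $(a_n-1)b_n$ is non-negative, so the sum is non-negative and we are done.

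The only point requiring care is the algebra of the geometric-sum identity, i.e.\ pinning down the normalization $\frac{p^i(1-p)}{1-p^{i+1}} = 1/A_i$ exactly; there is no genuine analytic obstacle. For context, the statement has the flavour of Chebyshev's sum inequality for similarly ordered sequences (indeed $\{a_n\}$ is increasing), but here we need \emph{no} monotonicity assumption on $\{b_n\}$, precisely because the constant in front on the right is the sharp value $1/A_i$ rather than the Chebyshev normalization $1/(i+1)$ — which is what makes the argument so short.
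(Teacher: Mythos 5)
Your proof is correct, and it takes a genuinely different (and cleaner) route than the paper's. Your key observation is that the prefactor $\frac{p^i(1-p)}{1-p^{i+1}}$ equals exactly $1/A_i$ with $A_i=\sum_{n=0}^i a_n$, so the factor $\sum_{n=0}^i a_n$ on the right-hand side cancels and the lemma reduces to $\sum_{n=0}^i a_n b_n \ge \sum_{n=0}^i b_n$, which is immediate from $a_n=p^{-n}\ge 1$ and $b_n\ge 0$. The paper instead argues in a Chebyshev-like style: it expands $\bigl(\sum_n a_n\bigr)\bigl(\sum_n b_n\bigr)$ into the diagonal part $\sum_n a_n b_n$ plus the off-diagonal part $\sum_n b_n\sum_{m\ne n}a_m$, bounds $\sum_{m\ne n}a_m\le\sum_{m=1}^i a_m=\frac{1-p^i}{p^i(1-p)}$ using $a_n\ge a_0$, and then absorbs the leftover $\sum_n b_n$ into $\sum_n a_n b_n$ using $a_n\ge 1$, with the constant $\frac{1-p^{i+1}}{p^i(1-p)}$ emerging from the algebra $1-p^i+p^i(1-p)=1-p^{i+1}$. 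Both arguments ultimately rest on the same two ingredients (the closed form of the geometric partial sum and $a_n\ge 1$), but yours bypasses the product expansion entirely, is shorter, and makes it transparent that the constant is sharp (equality holds for $b_0=1$, $b_n=0$ for $n\ge 1$).
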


\begin{proof}
We start from the right hand side of the inequality:
\begin{eqnarray*}
\sum_{n=0}^i a_n \cdot  \sum_{n=0}^i b_n & =
& \sum_{n=0}^i a_n \cdot b_n + \sum_{n=0}^i b_n \cdot \sum_{m=0,m\ne
  n}^i a_m\\
& \leq & \sum_{n=0}^i a_n \cdot b_n + \sum_{n=0}^i b_n \cdot \sum_{m=1}^i a_m\\
& = & \sum_{n=0}^i a_n \cdot b_n + \sum_{n=0}^i b_n \cdot \sum_{m=1}^i \frac{1}{p^m}\\
& \leq & \sum_{n=0}^i a_n \cdot b_n + \sum_{n=0}^i b_n \cdot \frac{1-p^i}{(1-p)p^i}\\
\end{eqnarray*}
Now, since $\{a_n\}$ is an increasing sequence whose minimum value is $1$,
achieved at $n=0$ we can say that

\begin{eqnarray*}
\sum_{n=0}^i a_n \cdot  \sum_{n=0}^i b_n & \leq & \sum_{n=0}^i a_n \cdot b_n + \frac{1-p^i}{p^i (1-p)} \cdot
\sum_{n=0}^i b_n \cdot a_n\\
& \leq & \frac{1-p^i+p^i(1-p)}{p^i (1-p)} \cdot
\sum_{n=0}^i b_n \cdot a_n.
\end{eqnarray*}
\end{proof}

Using the inequality of Lemma~\ref{lem:exp-sums} we get
\begin{eqnarray*} \rec(i,j,l) & \geq  & \rec_{\direct}(i,j,l) 
 +
  \left\{\frac{\theta}{4 \pi} \cdot  (1- e^{-2\lambda \pi l^2 p^i}) \cdot \frac{p^i(1-p)}{1 - p^{i+1}} \sum_{k=0}^{i}\left(\frac{1}{p^{k}}
 \right) \cdot \sum_{k=0}^i p^k(1-p) \cdot
\rec(k,j,l - \delta) \right.\\
 & & \left.+ \frac{\theta}{4 \pi}\cdot  \left( \frac{1 - e^{-2\lambda \pi l^2 p^{i+1}}}{p^{i+1}}
\right) \cdot \sum_{k=i+1}^\infty p^k(1-p) \cdot
\rec(k,j,l - \delta)\right\} \cdot (1-\rec_{\direct}(i,j,l))\\
& \geq & \rec_{\direct}(i,j,l) 
 +
  \left\{ \frac{\theta}{4 \pi} \cdot  (1- e^{-2\lambda \pi l^2 p^{i+1}}) \cdot
  \frac{1-p^{i+1}}{1 - p^{i+1}} \cdot \sum_{k=0}^i p^k(1-p) \cdot
\rec(k,j,l - \delta) \right.\\
 & & \left.+ \frac{\theta}{4 \pi}\cdot  \left( \frac{1 - e^{-2\lambda \pi l^2 p^{i+1}}}{p^{i+1}}
\right) \cdot \sum_{k=i+1}^\infty p^k(1-p) \cdot
\rec(k,j,l - \delta) \right\} \cdot (1-\rec_{\direct}(i,j,l))\\
\end{eqnarray*}
We now get a form which is easier to handle i.e.
\begin{equation} 
\label{eq:i-j-l}
\rec(i,j,l) \geq   \rec_{\direct}(i,j,l) 
 + \cdot (1-\rec_{\direct}(i,j,l)) \frac{\theta}{4 \pi} \left(1 - e^{-2\lambda \pi l^2 p^{i+1}}\right)  
 \cdot \sum_{k=0}^\infty p^k(1-p) 
\rec(k,j,l - \delta).
\end{equation}
Observing that $\rec(l) = \sum_{i=0}^\infty \sum_{j=0}^\infty p^i (1-p) p^j (1-p)
\rec(i,j,l)$, and noting that for $i \leq j$,
\[\rec_{\direct}(i,j,l) = e^{-\lambda \pi l^2 p^{i+1}},\]
the inequality of~(\ref{eq:i-j-l}) gives us
\begin{eqnarray*} \rec(l) & \geq &\sum_{i=0}^\infty
  \sum_{j=0}^\infty p^i (1-p) p^j (1-p) \rec_{\direct}(i,j,l) \\ & & + 
  \sum_{i=0}^\infty  p^i (1-p) \frac{\theta}{4 \pi} \cdot \left(1 -
    e^{-2\lambda \pi l^2 p^{i+1}}\right) \cdot\sum_{j=0}^{i-1} \left(1 -
    e^{-\lambda \pi l^2 p^{j+1}}\right) \cdot
\sum_{k=0}^\infty p^k(1-p) p^j (1-p) \cdot
\rec(k,j,l - \delta)\\
& & 
 +
  \sum_{i=0}^\infty  p^i (1-p) \frac{\theta}{4 \pi} \cdot \left(1 -
    e^{-2\lambda \pi l^2 p^{i+1}}\right) 
\cdot \left(1 -
    e^{-\lambda \pi l^2 p^{i+1}}\right) \cdot\sum_{j=i}^\infty 
\sum_{k=0}^\infty p^k(1-p) p^j (1-p) \cdot
\rec(k,j,l - \delta)\\
\end{eqnarray*}

Observe that 
\[\rec_{\direct}(l) = \sum_{i=0}^\infty \sum_{j=0}^\infty p^i (1-p)
p^j (1-p) \rec_{\direct}(i,j,l). \]
With this and the simple fact that for $j < i$
\[  (1 -
  e^{-\lambda \pi l^2 p^j})  > (1 -
  e^{-\lambda \pi l^2 p^i}),\]
we get
\begin{eqnarray*}
\rec(l) & \geq & \rec_{\direct}(l) +  \sum_{i=0}^\infty  p^i (1-p) \frac{\theta}{4 \pi} \cdot \left(1 -
    e^{-2\lambda \pi l^2 p^{i+1}}\right) \cdot \left(1 -
    e^{-\lambda \pi l^2 p^{i+1}}\right) \cdot \rec(l - \delta).
\end{eqnarray*}
We begin by lower bounding $\rec_{\direct}(l)$.
\begin{eqnarray*}
\rec_{\direct}(l) & = & \sum_{i=0}^\infty \sum_{j=0}^\infty p^i p^j (1-p)^2 e^{-\lambda \pi l^2 p^{\min\{i,j\}+1}} \\
& = & 2 \sum_{i=0}^\infty \sum_{j=i}^\infty p^i p^j (1-p)^2
e^{-\lambda \pi l^2 p^{i+1}} - \sum_{i=0}^\infty  p^{2i} (1-p)^2
e^{-\lambda \pi l^2 p^{i+1}}\\ 
& = & (1-p^2)\cdot (\sum_{i=0}^\infty  p^{2i} e^{-\lambda \pi l^2 p^{i+1}})
\end{eqnarray*}
Now we lower bound the summation by integrating over the decreasing
part of the function $p^{2x} e^{-\lambda \pi l^2 p^{x+1}}$, after
which we get 
\begin{equation}
\label{eq:direct}
\rec_{\direct}(l) \geq (1-p^2) \cdot (\sum_{i=0}^\infty  p^{2i}
e^{-\lambda \pi l^2 p^{i+1}}) \geq \frac{(1-3e^{-2})(1-p^2)}{(\lambda
  \pi l^2 p)^2}.
\end{equation}

We now turn to  the coefficient of $\rec(l - \delta)$ and observe that, 
\begin{eqnarray*}
\sum_{i=0}^\infty  p^i \cdot \left(1 -
    e^{-2\lambda \pi l^2 p p^i}\right) \left(1 -
    e^{-\lambda \pi l^2 p p^i}\right) 
    & \geq & \int_0^{\infty} p^x \cdot \left(1 -
    e^{-2\lambda \pi l^2 p p^x}\right) \cdot \left(1 -
    e^{-\lambda \pi l^2 p p^x}\right) dx \\
    & = & \frac{1}{\ln(1/p)} \cdot \left(1 - \frac{1 - e^{-2\lambda
      \pi l^2 p}}{2\lambda \pi l^2 p} - \frac{1 - e^{-\lambda
      \pi l^2 p}}{\lambda \pi l^2 p} + \frac{1 - e^{-3\lambda
      \pi l^2 p}}{3\lambda \pi l^2 p} \right).
\end{eqnarray*}

Note that 
\[ \frac{1 - e^{-3\lambda   \pi l^2 p}}{3\lambda \pi l^2 p} 
	\geq \frac{1 - e^{-2\lambda  \pi l^2 p}}{2\lambda \pi l^2 p} \]
	
which gives us the recursion
\begin{equation}
\label{eq:recursion-final}
\fbox{$\displaystyle \rec(l)  \geq  \frac{(1-3e^{-2})(1-p^2)}{(\lambda \pi l^2 p)^2} + 
\frac{\theta}{4 \pi} \cdot \frac{1-p}{\ln(1/p)} \cdot \left( 1 - \frac{1 - e^{-\lambda
      \pi l^2 p}}{\lambda \pi l^2 p}  \right) \cdot 
 \rec(l - \delta).$}
\end{equation}

Consider the function
\[ \alpha(p) = \frac{(1-p)}{\ln(1/p)}\cdot \left( 1 - \frac{1 - e^{-\lambda
      \pi l^2 p}}{\lambda \pi l^2 p}  \right)\]

For $\alpha(p)$ to be greater than some constant $\eta > 0$ we require
\begin{equation}
\label{eq:eta-condition}
\lambda \pi l^2 p \left( 1-\eta \cdot \frac{\ln(1/p)}{1-p} \right) \geq 1-e^{-\lambda
      \pi l^2 p}.
\end{equation}
Under the condition that
\begin{equation}
\label{eq:eta-constraint}
\eta < \frac{1-p}{\ln(1/p)}.
\end{equation}
we get that~(\ref{eq:eta-condition}) is satisfied whenever
\[ \lambda \pi l^2 p \left( 1-\eta \cdot \frac{\ln(1/p)}{1-p} \right) \geq 1. \]
We denote the least value of $l$ for which this inequality is true by
$\gamma(\eta)$ i.e.
\begin{equation}
\label{eq:eta-value}
\gamma(\eta) = \sqrt{\frac{1}{\lambda \pi p \cdot \left(1-\eta
    \cdot \frac{\ln(1/p)}{1-p} \right)} }.
\end{equation}

Now, in order to lower bound the probability of $u$ and $v$ having a
$\theta$-good path, we choose an $\eta> 0$
satisfying~(\ref{eq:eta-constraint}) and unfold the recursion obtained
in~(\ref{eq:recursion-final}) till the path either connects directly
to $v$ or reaches a point $w$ such that $d(w,v) \leq
\gamma(\eta)$. 
Observing that $\frac{(1-3e^{-2})(1-p^2)}{(\lambda \pi l^2 p)^2} <
\frac{(1-3e^{-2})(1-p^2)}{(\lambda \pi (l-\delta)^2 p)^2}$, and that
the number of edges we may have in the path is finite but unbounded we
get a geometric series sum when we open the recursion, giving us
\begin{equation}
\label{eq:l-star}
\rec(l) \geq \frac{(1-3e^{-2})(1-p^2)}{(\lambda \pi l^2 p)^2} \cdot
\frac{1}{1 - \frac{\eta \theta}{2 \pi} } \cdot \rec(l^*),
\end{equation}
where $l^*
\leq \gamma(\eta)$. Let us assume that the point we have reached is
$w$ with $l^* = d(u,w)<\gamma(\eta)$.  From this point we only
consider the case that $w$ connects directly to $v$. Using the lower
bound of~(\ref{eq:direct}) we get that 
\[ \rec(l^*) \geq \rec_{\direct}(l^*) \geq \frac{(1-3e^{-2})(1-p^2)}{(\lambda \pi
  l^{*2} p)^2} \geq \frac{(1-3e^{-2})(1-p^2)}{(\lambda \pi
  \gamma(\eta)^2 p)^2}.\] 
Substituting this in~(\ref{eq:l-star}) we get
\[ \rec(l) \geq \frac{(1-3e^{-2})(1-p^2)}{(\lambda \pi \gamma(\eta)^2 p)^2}
\cdot \frac{(1-3e^{-2})(1-p^2)}{(\lambda \pi l^2 p)^2} \cdot
\frac{1}{1 - \frac{\eta\theta}{2 \pi}. }
\] 
Substituting the value of $\gamma(\eta)$ from~(\ref{eq:eta-value}) we
get
\[ \rec(l) \geq \left( 1-\eta \cdot \frac{\ln(1/p)}{1-p} \right)^2
\cdot \frac{(1-3e^{-2})^2(1-p^2)^2}{(\lambda \pi l^2 p)^2} \cdot
\frac{1}{1 - \frac{\eta\theta}{2 \pi} }.
\]

\end{proof}

\section{Hierarchical neighbor graphs as a clustering mechanism for WSNs}
\label{sec:sensors}
In this section we simulate the situation where a hierarchical
neighbor graph architecture is used as a clustering mechanism to
collect data from a field of ad hoc wireless sensor devices. We
compare the performance of our architecture to that of
LEACH~\cite{heinzelmann-twc:2002} under assumptions similar to theirs
and find that hierarchical neighbor graphs compare favorably.

The quality of a clustering architecture that collects data from
wireless sensor networks is measured by determining: (a) Data
throughput.  In certain sensor networks applications, the amount of
raw data that is communicated may not be as important as the
information content, or the {\em effective data}, since the data from
certain nodes may be redundant. Hence we consider effective data
throughput as measure of network performance. (b) Energy efficiency
i.e. the energy expended in transmitting a certain amount of data to
the base station.  (c) Network Lifetime. Often it is not possible to
recharge node batteries. We therefore study the {\em network lifetime
}, which we define here as the time it takes for all nodes in the
network to die out.


We consider a typical field of wireless sensors with a set of nodes
$V$ in which a large number of sensor nodes continuously sense data,
process it and communicate the information to an external sink or a
base-station (BS). The sensors coordinate among themselves to form a
hierarchical communication network, its architecture determined by
$\hn_p^w(V)$ where the weight function $w$ depends on the battery
power available at a node at any point of time.  Nodes at higher
levels of the hierarchy get depleted of their energy quicker, and so
there is a heterogeneity in the power profile of the network. We
periodically reform the network to distribute the energy load
according to the residual energy at each node. Once the network
topology has been formed, each node sensor constantly monitors its
environment and periodically sends the data up the hierarchy to its
parent. The nodes at the top-most level of $\hn^w_p(V)$ communicate
directly with the BS.

Sensor data is often highy correlated locally so we aggregate data,
reducing the number of messages transmitted to the BS, hence improving
the energy efficiency of the network.  We consider two different
applications of WSNs and the corresponding data aggregation models
used.  (a) {\em Limited Aggregation}: Only data signals from nodes
located close to each other are highly correlated and can be
aggregated into a single signal. We believe that in $\hn^w_p(V)$,
nodes which share a common parent are located close to each other and
hence the data signals are correlated. In this case, a bounded number
of data signals are fused into a single data signal. (b) {\em
  Unlimited Aggregation}: All data signals, irrespective of location
can be fused to get a single signal. This model is valid for
applications in which we are interested in quantities like the
average, min or max of a set of values e.g. radiation level monitoring
in a nuclear plant where the most useful information for the safety of
plant is the maximum value~\cite{rajagopalan-cst:2006}. In this case,
all data signals at a relaying node are fused into a single signal.

\noindent{\em Network setup and operation.} 
We assume that each node has the computational ability to support MAC
protocols and perform the signal processing functions required. To
begin with, nodes $V$ organize themselves into $\hn^w_p(V)$. The
weight function $w$ is given by $\mbox{batt}_u/\mbox{batt}_{th}$,
where $\mbox{batt}_u$ is the residual battery of node $u$ and
$\mbox{batt}_{th}$ is the threshold energy below which a node is
declared dead. All nodes $\{u$ s.t. $\lev(u)=0 \}$, periodically
transmit data to their parents. All other nodes receive data from all
their children, fuse the incoming data signals along with their own data
according to the data aggregation model and transmit to their
parent. The nodes at the topmost level transmit data to the BS. Nodes
$\{u$ s.t. $\lev(u)=0 \}$ switch off to conserve energy when they are
not transmitting. The operation of the network is divided into
rounds. The network is reformed after the end of each round, which is
after a fixed duration.  We asssume collision free traffic in the
network. This requires each node in $\hn_p(V)$ to create a TDMA
schedule and communicate this to their children. In addition the BS
distributes spreading codes for direct-sequence spread spectrum
(DSSS)~\cite{hu-ton:1993} among the nodes in the networks.

\paragraph{Simulations and Discussion}
In this section we simulate $\hn^w_p(V)$ and compare it to
LEACH~\cite{heinzelmann-twc:2002}. We compare the energy consumption,
network lifetime and the  effective data throughput received at the BS. The
effective data throughput is measured by the number of data signals
represented by the aggregated signal received at the BS. $\hn^w_p(V)$
was observed to transmit 8 timesmore effective data than LEACH, when
using unlimited data aggregation, and 2 to 3 times more when using
limited data compression
     
Now we describe the simulation environment and parameters. We mimicked
LEACH's setup in order to compare our structure with
theirs~\cite{heinzelmann-twc:2002}. We initialized a network with 100
sensor nodes, $V$, spread uniformly over a square region of side 100
unit, each having the same energy of $2J$ to start with. We considered a
node to be dead if node energy drops below $0.1J$. We constructed
$\hn^w_p(V)$, with parameter $p=0.5$, periodically at intervals of
$20s$. The BS is located close to $A$ but outside it.  The bandwidth
of the channel was taken to be 1 Mb/s, and each data signal was taken
to be $500$ bytes long with a $25$ byte packet header for each type of
packet. We simulated this scenario and compared our mechanism against
LEACH, observing energy consumption and throughput in the network. The
time for a round was chosen to be $20s$.

In LEACH for a $N=100$ node network with $k=5$ cluster heads,
approximately $N/k=20$ signals are fused into a single signal at each
cluster head, assuming a uniform distribution.  We simulated our setup
for the same 20:1 compression ratio, as well as for a lower
compression ratio of 10:1. In the second data aggregation model, each
node fuses all signals present to a single signal and then transmits
to it's parent.

Nodes at $\{u$ s.t. $\lev(u)=0 \}$ sense data and send to their
respective parents. Nodes at level $i(>1)$ receive data from all
children, process the data and remove redundancy, and forward the
effective data to their parent. This cycle continues till the highest
level when data is sent to the BS. Nodes at the lowest level usually
spend lesser amount of energy, whereas those at higher level will be
quickly depleted of their battery power.

For our simulation we assume a simple model for energy dissipation
(the same model as used to study LEACH
in~\cite{heinzelmann-twc:2002}). To transmit $l$ bits of data over a
distance $d$, the energy dissipated is $E_{Tx}(l,d) =
lE_{elec}+l\epsilon_{fs}d^2$, and to receive a $l$-bit message is
$E_{Rx}(l)=lE_{elec}$. The radio electronics energy $E_{elec}=50$
$nJ/bit$, depends on the coding and spreading of the signal. And the
amplifier constant $\epsilon_{fs}=10$ $pJ/bit/m^2$ depends on the
acceptable bit-error rate. In addition energy is consumed for data
aggregation which is taken to be $E_{DA}=5$ $nJ/bit/signal$. We work
on the assumption that the nodes are capable of controlling their
power in order to vary the transmission radius.
\begin{figure}[htbp]
\begin{center}
\includegraphics[scale=.80]{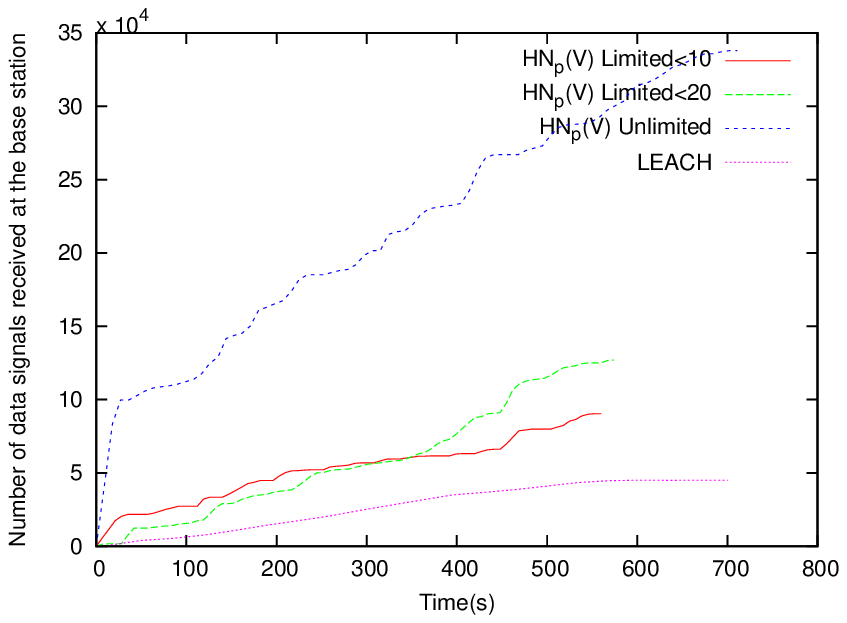}
\caption{Total amount of data received at the BS over time.}
\label{fig:4a}
\end{center}
\end{figure}
\begin{figure}[htbp]
\begin{center}
\includegraphics[scale=.80]{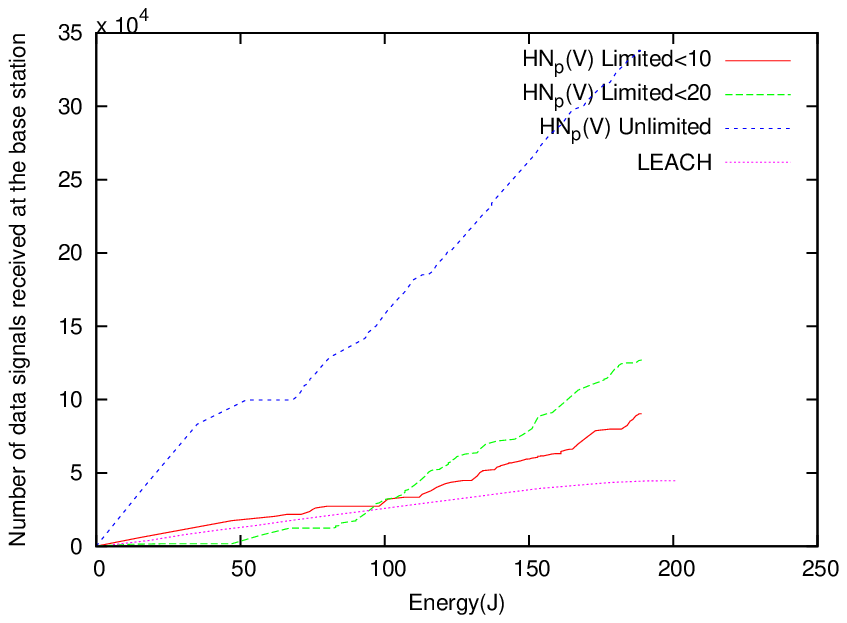}
\caption{Total amount of data received at the BS per given amount of energy.}
\label{fig:4b}
\end{center}
\end{figure}
\begin{figure}[htbp]
\begin{center}
\includegraphics[scale=.80]{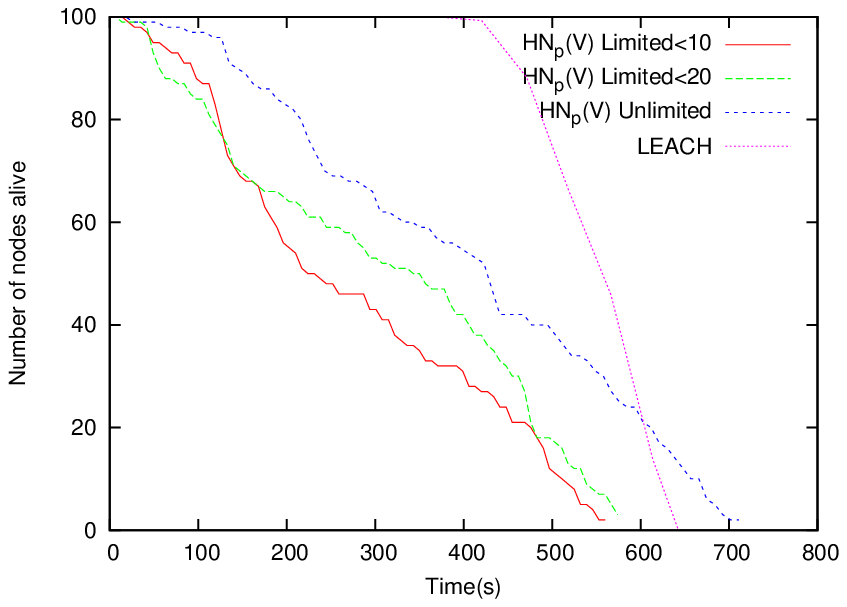}
\caption{Number of nodes alive over time.}
\label{fig:4c}
\end{center}
\end{figure}
Our simulation results are presented in
Figures~\ref{fig:4a},~\ref{fig:4b},~\ref{fig:4c}. The data points for
LEACH are taken from~\cite{heinzelmann-twc:2002}. Clearly when using
unlimited data aggregation (``$\hn_p^w(V)$ Unlimited'' in the plots),
it is observed that the throughput (Figure~\ref{fig:4a}) and the ratio
of data sent to energy consumed (Figure~\ref{fig:4b}) are much higher
than for LEACH, which is expected since our structure aggregates at
several levels. Even when we limit the amount of aggregation allowed
(in the plots ``$\hn_p^w(V)$ Limited $< 20$'' represents a 20:1
compression and ``$\hn_p^w(V)$ Limited $< 10$'' represent 10:1)
$\hn_p^w(V)$ outperforms LEACH in these two aspects. With network
lifetime the picture is a little more complex. In Figure~\ref{fig:4c} we
see that in $\hn_p^w(V)$ the time the last node dies is more or less
the same as LEACH, although LEACH loses a lot of nodes suddenly while
$\hn_p^w(V)$ degrades slowly. In conclusion, it appears from the
simulation that our structure does not dominate LEACH in terms of
lifetime but uses energy more efficiently and provides a higher
throughput than LEACH does, running far ahead in the case of highly
compressible data.

\end{document}